\documentclass[12pt]{amsart}
\usepackage{amssymb,amsmath}

\usepackage{t1enc}
\usepackage[latin1]{inputenc}
\usepackage[german,english]{babel}
\usepackage{amsfonts}
\usepackage[all]{xy}
\usepackage{graphicx}
\usepackage{color}
\usepackage{tikz}
\usetikzlibrary{arrows,decorations.pathmorphing,backgrounds,positioning,fit,petri}

\usepackage{geometry}
\geometry{textwidth=14.5cm}

\newtheorem{defi}{Definition}[section]
\newtheorem{prop}[defi]{Proposition}

\theoremstyle{remark}
\newtheorem{rem}[defi]{Remark}
\newtheorem{exa}[defi]{Example}

\begin{document}

\title[Iterated LD-Problem in non-associative key establishment]{Iterated LD-Problem in non-associative key establishment}
\author{Arkadius Kalka and Mina Teicher}
\address{Department of Mathematics,
Bar Ilan University,
Ramat Gan 52900,
Israel}
\email{Arkadius.Kalka@rub.de, teicher@math.biu.ac.il}
\urladdr{http://homepage.ruhr-uni-bochum.de/arkadius.kalka/}

\begin{abstract} 
We construct new non-associative key establishment protocols for all left self-distributive (LD), multi-LD-, and mutual LD-systems.
The hardness of these protocols relies on variations of the (simultaneous) iterated LD-problem and its generalizations.
We discuss instantiations of these protocols using generalized shifted conjugacy in braid groups and their quotients,
LD-conjugacy and $f$-symmetric conjugacy in groups. 
We suggest parameter choices for instantiations in braid groups, symmetric groups and several matrix groups.
\end{abstract} 

\subjclass[2010]{
20N02, 20F36}

\keywords{Non-commutative cryptography, key establishment protocol,
magma (groupoid), left distributive system, braid group, shifted conjugacy, conjugacy coset problem, $f$-conjugacy, $f$-symmetric conjugacy.}

\maketitle

\section{Introduction}
 In an effort to construct new key establishment protocols (KEPs), which are hopefully harder to break than previously proposed non-commutative schemes, the first author introduced in his PhD thesis \cite{Ka07} (see also \cite{Ka12}) the first non-associative generalization
of the Anshel-Anshel-Goldfeld KEP \cite{AAG99}, which  revolutionized the field of \emph{non-commutative public key cryptography} (PKC) more than ten years ago.
For an introduction to non-commutative public key cryptography we refer to the book by Myasnikov et al. \cite{MSU11}. 
For further motivation and on {\it non-associative PKC} we refer to \cite{Ka12}. It turns out (see \cite{Ka12}) that in the context of AAG-like KEPs for magmas, left self-distributive systems (LD-systems) and their generalizations (like multi-LD-systems) naturally occur. 
A construction that provides KEPs for all LD-, multi-LD- and mutually left distributive systems was presented in \cite{KT13}.
With this method at hand any LD- or multi-LD-system automatically provides a KEP, and we obtain a rich variety of new non-associatiave KEPs
coming from LD-, multi-LD-, and other left distributive systems.  
Here, we propose somehow improved, iterated versions of the KEPs from \cite{KT13}.
\par
\medskip
\emph{Outline.} In section 2 we review LD-, multi-LD-, and  mutually left distributive systems and provide several important examples,
namely LD-conjugacy and $f$-symmetric conjugacy in groups and shifted conjugacy in braid groups.  
Section 3 describes a KEP for all LD-systems, namely an iterated version of Protocol 1 from \cite{KT13}, and we discuss related base problems. 
In section 4 we describe and analyze a KEP which does not only apply for all multi-LD-systems, but also for a big class of partial multi-LD-systems.
This KEP is an iterated version of Protocol 2 from \cite{KT13}. 
In section 5 we discuss instantiations of these general protocols using generalized shifted conjugacy in braid groups. 
In particular, in section 5.3 we discuss a relevant base problem, namely the (subgroup) conjugacy coset problem, which seems to be a relatively new group-theortic problem, apparently first mentioned in \cite{KT13}.
In section 5.4 we propose several concrete instantiations with parameter suggestions in braid and symmetric groups.
Section 6 deals with other instantiations, namely instantiations using $f$-conjugacy (section 6.1)  and instantiations using $f$-symmetric conjugacy (section 6.2) in groups. For both LD-systems we provide concrete instantiations in finite and infinite matrix groups with suggestions for parameter choices.  
\par
\medskip
\emph{Implementation.} All concrete realizations of the KEPs were implemented in MAGMA \cite{BCP97} which also contains an implementation of braid groups following \cite{CK+01}. Implementation details for these non-associative KEPs are provided in \cite{KT13a}. 
 
\section{LD-systems and other distributive systems}

\subsection{Definitions}
\begin{defi} 
A \emph{left self-distributive (LD) system} $(S,*)$ is a set $S$ equipped with a binary operation $*$ on $S$ which satisfies the 
\emph{left self-distributivity law} 
\[ x*(y*z)=(x*y)*(x*z) \quad {\rm for} \,\, {\rm all} \,\, x,y,z\in S. \] 
\end{defi}

\begin{defi} (Section X.3. in \cite{De00})
Let $I$ be an index set. A \emph{multi-LD-system}
 $(S,(*_i)_{i\in I})$ is a set $S$ equipped with a family of binary operations $(*_i)_{i\in I}$ on $S$ such that 
\[ x*_i(y*_jz)=(x*_iy)*_j(x*_iz) \quad {\rm for} \,\, {\rm all} \,\, x,y,z\in S \]
is satisfied for every $i,j$ in $I$. Especially, it holds for $i=j$, i.e., $(S,*_i)$ is an LD-system. If $|I|=2$ then we call $S$ a \emph{bi-LD-system}. 
\end{defi}

More vaguely, we will also use the terms \emph{partial multi-LD-system} and simply \emph{left distributive system} if the laws of a multi-LD-system
are only fulfilled for special subsets of $S$ or if only some of these (left) distributive laws are satisfied. 

\begin{defi} 
A \emph{mutual left distributive system} $(S, *_a, *_b)$ is a set $S$ equipped with two binary operations $*_a, *_b$ on $S$ such that 
\[ x*_a(y*_bz)=(x*_ay)*_b(x*_az) \quad  x*_b(y*_az)=(x*_by)*_a(x*_bz)  \quad {\rm for} \,\, {\rm all} \,\, x,y,z\in S. \]
\end{defi}

A mutual left distributive system $(L, *_a, *_b)$ is only a partial bi-LD-system. 
The left selfdistributivity laws need not hold, i.e., $(L,*_a)$ and $(L, *_b)$ are in general no LD-systems. 

\subsection{Examples}
We list examples of LD-systems, multi-LD-systems and mutual left distributive systems. 
More details can be found in \cite{De00, De06, Ka12, KT13}.

\subsubsection{Trivial example.}  $(S,*)$ with $x*y=f(y)$ is an LD-system for any function $f: S\rightarrow S$. 

\subsubsection{Free LD-systems.} A set $S$ with a binary operation $*$, that satisfies no other relations than those resulting from the left self-distributivity law, is a free LD-system. Free LD-systems are studied extensively in \cite{De00}. 

\subsubsection{Conjugacy.} A classical example of an LD-system is $(G,*)$ where $G$ is a group equipped with the conjugacy operation $x*y=x^{-1}yx$ (or $x*^{\rm rev}y=xyx^{-1}$).
 Note that such an LD-system cannot be free, because conjugacy satisfies additionally the idempotency law $x*x=x$. 

\subsubsection{Laver tables.} Finite groups equipped with the conjugacy operation are not the only finite LD-systems. Indeed, the socalled \emph{Laver tables} provide the classical example for finite LD-systems. 
There exists for each $n\in \mathbb{N}$ an unique LD-system $L_n=(\{ 1, 2, \ldots , 2^n \},*)$ with $k*1=k+1$.
The values for $k*l$ with $l\ne 1$ can be computed by induction using the left self-distributive law.
The Laver tables for $n=1,2,3$ are
\[ \begin{tabular}{c|cc} $L_1$ &1&2 \\ \hline 1&2&2 \\ 2&1&2 \end{tabular} \quad
\begin{tabular}{c|cccc} $L_2$ &1&2&3&4 \\ \hline 1&2&4&2&4 \\ 2&3&4&3&4 \\ 3&4&4&4&4 \\ 4&1&2&3&4 \end{tabular} \quad 
\begin{tabular}{c|cccc cccc} $L_3$ &1&2&3&4&5&6&7&8 \\ \hline 1&2&4&6&8&2&4&6&8 \\ 2&3&4&7&8&3&4&7&8 \\ 3&4&8&4&8&4&8&4&8 \\ 4&5&6&7&8&5&6&7&8 \\
                                                      5&6&8&6&8&6&8&6&8 \\           6&7&8&7&8&7&8&7&8 \\ 7&8&8&8&8&8&8&8&8 \\ 8&1&2&3&4&5&6&7&8
\end{tabular}  \]
Laver tables are also described in \cite{De00}. 

\subsubsection{LD-conjugacy.}
Let $G$ be a group, and $f\in End(G)$. Set 
\[ x*_fy=f(x^{-1}y)x, \] 
then $(G,*_f)$ is an LD-system. 
We call an ordered pair $(u, v)\in G \times G$ \emph{$f$-LD-conjugated} or \emph{LD-conjugated}, or simply \emph{$f$-conjugated}, denoted by $u\stackrel{}{\longrightarrow }_{*_f}v$, if there exists a $c\in G$ such that $v=c*_f u=f(c^{-1}u)c$.
\par
More general,  let $f,g,h \in End(G)$. Then the binary operation
$x*y=f(x^{-1}) \cdot g(y)\cdot h(x)$ yields an LD-structure on $G$ if and only if 
\begin{equation} \label{fConjLDeqs} fh=f, \quad gh=hg=hf, \quad fg=gf=f^2, \quad h^2=h. 
\end{equation}
See Proposition 2.3 in \cite{KT13}.
The simplest solution of the system of equations (\ref{fConjLDeqs}) is $f=g$ and $h={\rm id}$ which leads to the definition of LD-conjugacy given above. 

\subsubsection{Shifted conjugacy.}
Consider the braid group on infinitely many strands
\[ B_{\infty }= \langle \{\sigma _i \}_{i\ge 1} \mid \sigma _i\sigma _j=\sigma _j\sigma _i \,\,{\rm for} \,\, |i-j|\ge 2, \,\, 
\sigma _i\sigma _j\sigma _i=\sigma _j\sigma _i\sigma _j \,\, {\rm for} \,\, |i-j|=1\rangle  \] 
where inside $\sigma _i$ the $(i+1)$-th strand crosses over the $i$-th strand. 
The {\it shift map} $\partial : B_{\infty } \longrightarrow B_{\infty }$ defined by $\sigma _i \mapsto \sigma _{i+1}$ for all $i\ge 1$
is an injective endomorphism. Then
$B_{\infty }$ equipped with the {\rm shifted conjugacy} operations $*$, $\bar{*}$ defined by
\[ x*y=\partial x^{-1}\cdot \sigma _1 \cdot \partial y \cdot x, \quad \quad  x\, \bar{*}\, y=\partial x^{-1}\cdot \sigma _1^{-1} \cdot \partial y  \cdot x  \]
is a bi-LD-system. In particular, $(B_{\infty },*)$ is an LD-system. 
\par
Dehornoy points out, that once the definition of shifted conjugacy is used, braids inevitably appear (see Exercise I.3.20 in \cite{De00}). 
Consider a group $G$, an endomorphism $f \in End(G)$, and a fixed element $a\in G$. Then the binary operation
$x*y=x *_{f,a}y=f(x)^{-1} \cdot a \cdot f(y)\cdot x$
yields an LD-structure on $G$ if and only if $[f^2(x), a]=1$ for all $x\in G$, and $a$ satisfies the relation $a f(a) a=f(a) a f(a)$\footnote{ 
Note that $[\partial ^2(x), \sigma _1]=1$ for all $x\in B_{\infty }$, and 
$\sigma _1 \partial (\sigma _1) \sigma _1=\partial (\sigma _1) \sigma _1 \partial (\sigma _1)$ holds.}.
\par
Hence the subgroup $H=\langle \{ f^n(a) \mid  n\in \mathbb{N} \} \rangle $ of $G$ is a homomorphic image of
the braid group $B_{\infty }$ on infinitely many strands, i.e., up to an isomorphism, it is a quotient of $B_{\infty }$. 
In case of $a=1$ this subgroup $H$ is trivial and the binary operation $*_{f,1}$ becomes $f$-conjugacy. 
\par
There exists a straightforward generalization of Exercise I.3.20 in \cite{De00} for multi-LD-systems:
\par
Let $I$ be an index set. Consider a group $G$, a family of endomorphisms $(f_i)_{i\in I}$ of $G$, and a set of fixed elements $\{a_i\in G \mid i\in I\}$. 
Then $(G,(*_i)_{i\in I})$ with
\[ x*_iy= f_i(x^{-1})\cdot a_i \cdot f_i(y)\cdot x\]
is a multi-LD-system if and only if 
$f_i=f_j=:f$ for all $i\ne j$, 
\begin{equation}  \label{multiLD}  [a_i,f^2(x)]=1 \quad  \forall x\in G, \,\, i\in I, \quad 
{\rm and} \quad a_if(a_i)a_j=f(a_j)a_if(a_i) \quad  \forall i,j\in I. 
\end{equation}
For a proof see, e.g. Proposition 4.6 in \cite{Ka12}. 

\subsubsection{Generalized shifted conjugacy in braid groups.}
In the following we consider generalizations of the shifted conjugacy operations $*$ in $B_{\infty }$. We
set $f=\partial ^p$ for some $p\in \mathbb{N}$, and we choose $a_i\in B_{2p}$ for all $i\in I$ such that 
\begin{equation} \label{genShift} a_i\partial ^p(a_i)a_j=\partial ^p(a_j)a_i\partial ^p(a_i) \quad {\rm for} \,\, {\rm all} \,\,  i,j\in I.  \end{equation}  
Since $a_i\in B_{2p}$, we have $[a_i,\partial ^{2p}(x)]=1$ for all $x\in B_{\infty }$.
Thus the conditions (\ref{multiLD}) are fulfilled, and $x*_iy=x\partial ^p(y)a_i\partial ^p(x^{-1})$ defines a multi-LD-structure on $B_{\infty}$. 
For $|I|=1$, $p=1$ and $a=\sigma _1$, which implies $H=B_{\infty }$, we get Dehornoy's original definition of shifted conjugacy $*$. 
\par
It remains to give some natural solutions $\{a_i\in B_{2p} \mid i\in I \}$ of the equation set (\ref{genShift}). 
Let, for $n\ge 2$,  $\delta _n=\sigma _{n-1}\cdots \sigma _2\sigma _1$. For $p, q \ge 1$, we set
\[ \tau _{p,q}=\delta _{p+1}\partial (\delta _{p+1})\cdots \partial ^{q-1}(\delta _{p+1}). \]
Since $a=\tau _{p,p}^{\pm 1}\in B_{2p}$ fulfills $a\partial ^p(a)a=\partial ^p(a)a\partial ^p(a)$, it provides a lot of (multi)-LD-structures on $B_{\infty }$.  

\begin{prop} \label{abc}
(a) The binary operation $x*_ay=\partial ^p(x^{-1})a\partial ^p(y)x$ with $a=a'\tau _{p,p}a''$ for some $a',a''\in B_p$ yields an LD-structure on $B_{\infty }$ 
if and only if $[a',a'']=1$. 
\par
\noindent (b) Let $I$ be an index set.
The binary operations $x*_iy=\partial ^p(x^{-1})a_i\partial ^p(y)x$ with $a_i=a'_i\tau _{p,p}a''_i$ for some $a'_i,a''_i\in B_p$ ($i\in I$) yields a multi-LD-structure on $B_{\infty }$ 
if and only if $[a_i',a_j']=[a_i',a_j'']=1$ for all $i,j\in I$. (Note that $a_i''$ and $a_j''$ needn't commute for $i\ne j$.) 
\par
\noindent (c) The binary operations $x*_iy=\partial ^p(x^{-1})a_i\partial ^p(y)x$ ($i=1,2$) with $a_1=a_1'\tau _{p,p}a_1''$, $a_2=a_2'\tau _{p,p}^{-1}a_2''$ for some $a_1',a_1'',a_2',a_2''\in B_p$ 
yields a bi-LD-structure on $B_{\infty }$ if and only if $[a_1',a_1'']=[a_2',a_2'']=[a_1',a_2'']=[a_2',a_1'']=[a_1',a_2']=1$.
(Note that $a_1''$ and $a_2''$ needn't commute.)  
\par
\noindent (d)  $(B_{\infty }, *_1, *_2)$ with binary operations $x*_iy=\partial ^p(x^{-1})a_i\partial ^p(y)x$ ($i=1,2$) with $a_1=a_1'\tau _{p,p}a_1''$, $a_2=a_2'\tau _{p,p}^{-1}a_2''$ for some $a_1',a_1'',a_2',a_2''\in B_p$ is a mutual left distributive system 
 if and only if $[a_1',a_2'']=[a_2',a_1'']=[a_1',a_2']=1$.
(Note that $[a_1',a_1'']$, $[a_2',a_2'']$ and $[a_1'',a_2'']$ may be nontrivial.) 
\end{prop}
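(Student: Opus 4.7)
My plan is to reduce each of (a)--(d) to the stated commutator conditions by expanding the multi-LD identity for the given operations and exploiting three facts about $\tau_{p,p}$ and $\partial^p$. Write $s = \tau_{p,p}$, $t = \partial^p(\tau_{p,p})$, and $\bar x = \partial^p(x)$. The tools are: (i) $B_p$ and $\partial^p(B_{2p})$ commute element-wise (disjoint strand supports); (ii) the braid relation $sts = tst$, recorded just before Proposition \ref{abc}; (iii) the exchange identities $s\bar x = xs$ and $sx = \bar x s$ for every $x \in B_p$, which say that conjugation by $s$ swaps the subgroups $B_p$ and $\partial^p(B_p)$. Fact (iii) can be established either geometrically, since $\tau_{p,p}$ exchanges the strand-blocks $\{1,\ldots,p\}$ and $\{p+1,\ldots,2p\}$, or by induction on a reduced braid word for $x$; the analogous identities for $t$, $s^{-1}$, and $t^{-1}$ follow by applying $\partial^p$ and by inversion.

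For (a), substitute $a = a's a''$ and $\bar a = \bar{a'} t \bar{a''}$ into the single-LD identity $a\bar a a = \bar a a \bar a$. Moving $a'$ and $a''$ past the $\partial^p(B_{2p})$-factors is free by (i); applying (iii) trades them for $\bar{a'}$ and $\bar{a''}$ whenever one crosses an $s$; and a single application of (ii) collides the $s$--$t$ pattern. After these manipulations both sides reduce to a normal form of the shape $(a')^2(\bar{\bar{a''}})^2 \cdot W \cdot tst$, where $W$ is a length-two word in $\bar{a'}, \bar{a''}$ appearing in opposite orders on the two sides. Hence the identity holds iff $[\bar{a'},\bar{a''}] = 1$, equivalently iff $[a',a''] = 1$ by injectivity of $\partial^p$.

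For (b), repeat the procedure for every pair $(i,j)$ in the multi-LD equation. After reduction both sides share the same $(\bar{\bar{a_i''}})^2$ factor, and the discrepancy splits between two commuting subgroups: a $B_p$-part containing $a_i'$ and $a_j'$ in opposite orders, and a $\partial^p(B_p)$-part containing $\bar{a_i'}$ and $\bar{a_j''}$ in opposite orders. Because $B_p$ and $\partial^p(B_p)$ are supported on disjoint strand sets, the equality decouples into the independent commutator conditions $[a_i',a_j'] = 1$ and $[a_i',a_j''] = 1$, which must hold for all $i,j \in I$. No constraint on $[a_i'',a_j'']$ arises, because the $\bar{\bar{a_j''}}$-factor (which would be the natural home of such a constraint) cancels out during the reduction.

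For (c) and (d), the tools (ii) and (iii) extend verbatim to $s^{-1}$ and $t^{-1}$, so the reduction is uniform in the sign $\epsilon_i \in \{\pm 1\}$. In (c) one checks all four multi-LD equations for $(i,j) \in \{1,2\}^2$: the two diagonal cases yield $[a_i',a_i''] = 1$ for $i = 1, 2$ by (a), and the two off-diagonal cases yield the three cross conditions $[a_1',a_2'] = [a_1',a_2''] = [a_2',a_1''] = 1$ by (b). In (d) only the two off-diagonal equations (mutual distributivity, rather than full bi-LD) are required, so the diagonal self-commutators drop and only the three cross conditions remain. The only real obstacle in the proof is careful bookkeeping: one must track, through the many commutation steps, which moves are free by (i) and which are forced by the equation (and thereby contribute the commutator conditions).
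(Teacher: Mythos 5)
Your proposal is correct and is essentially the argument the paper intends: the paper's entire proof of this proposition is the single remark that these are straightforward computations, and your three tools --- commutation of $B_p$ with $\partial^p(B_{2p})$, the braid relation $sts=tst$ for $s=\tau_{p,p}$, $t=\partial^p(\tau_{p,p})$, and the block-exchange identities $sx=\partial^p(x)s$, $s\partial^p(x)=xs$ for $x\in B_p$ (valid likewise for $s^{-1}$ and $t^{\pm1}$) --- are exactly what that computation needs; I carried out (a) and (b) in full and both your normal form and the decoupling of the discrepancy into $a_i'a_j'$ versus $a_j'a_i'$ in the $B_p$-factor and $\partial^p(a_i')\partial^p(a_j'')$ versus $\partial^p(a_j'')\partial^p(a_i')$ in the $\partial^p(B_p)$-factor come out precisely as you describe. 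One caution worth recording: the reduction must be performed on the distributive law $x*_i(y*_jz)=(x*_iy)*_j(x*_iz)$ itself, whose specialization at $x=y=z=1$ yields the criterion $\partial^p(a_i)\,a_i\,\partial^p(a_j)=a_j\,\partial^p(a_i)\,a_i$; the paper's displayed condition (2), $a_if(a_i)a_j=f(a_j)a_if(a_i)$, has the roles of $a$ and $f(a)$ interchanged, and for $i\ne j$ it would place the commutator conditions on the double-primed letters and contradict the statement of (b) --- your stated conclusions show you in fact used the correct criterion.
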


The proofs are straightforward computations. The reader is recommended to draw some pictures.

\subsubsection{Symmetric conjugacy.} 
For a group $G$, there exists yet another LD-operation. $(G,\circ )$ is an LD-system with 
\[ x\circ y=xy^{-1}x. \]
Note that, contrary to the conjugacy operation $*$, for this \emph{"symmetric conjugacy" operation} $\circ $, 
the corresponding relation $\stackrel{}{\longrightarrow }_{\circ }$, defined by $x\stackrel{}{\longrightarrow }_{\circ }y$ if and only if there exists a $c\in G$ such that $y=c\circ x$, is not an equivalence relation. In particular, $\stackrel{}{\longrightarrow }_{\circ }$ is reflexive and symmetric, 
but not transitive. 

\subsubsection{$f$-symmetric conjugacy.}
One may consider several generalizations of this symmetric conjugacy operation $\circ $, as candidates for natural LD-operations in groups.
Let $G$ be a group, and $f,g,h \in End(G)$. Then the binary operation
$x\circ _{f,g,h}y=f(x) \cdot g(y^{-1})\cdot h(x)$ yields an LD-structure on $G$ if and only if 
\begin{equation} \label{fsymmConjLDeqs} f^2=f, \quad fh=gh=fg, \quad hg=gf=hf, \quad h^2=h. \end{equation}

For a proof see Proposition 2.13 in \cite{KT13}.
\par 
Except for $f^2=f=g=h=h^2$, the simplest solutions of the system of equations (\ref{fsymmConjLDeqs}) are $f^2=f=g$ and $h={\rm id}$, or
$f={\rm id}$ and $g=h=h^2$.  
\par
Let $G$ be a group, and $f\in End(G)$ an endomorphism that is also a projector ($f^2=f$).
Then $(G, \circ _f)$ and $(G, \circ _f^{\rm rev})$, defined by 
\[ x\circ _f y=f(xy^{-1})x \quad {\rm and} \quad x \circ _f^{\rm rev} y=xf(y^{-1}x), \] 
are LD-systems.
\par
We have the following left distributivity results.
\par
\noindent (i) The binary operations $\circ _{f,g,h}$ and $*_{f,g,h}$
are distributive over $\circ $. In particular $*$ is distributive over $\circ $. In short, the following equations hold.
\[   x*_{f,g,h}(y\circ z)=(x*_{f,g,h}y)\circ (x*_{f,g,h}z), \quad x\circ _{f,g,h}(y\circ z)=(x\circ _{f,g,h}y)\circ (\circ _{f,g,h}z) \forall x,y,z\in G.\]
(ii) The operations $\circ _f$ and $*_f$ ($*^{\rm rev}_f$) are distributive over $\circ _g$ if and only if $f=gf=fg$.
\par
From (ii) we conclude that $(G, \circ _f, \circ _g)$ is not a mutual left distributive system for $f\ne g$.

\section{Key establishment for all LD-systems}
\subsection{The protocol}

Recall that a \emph{magma} is a set $M$ equipped with a binary operation, say $\bullet $, which is possibly non-associative. 
For our purposes all interesting LD-systems are non-associative. 
Consider an element $y$ of a magma $(M,\bullet )$ which is an iterated product of other elements in $M$. Such an element can be described by a planar rooted binary
tree $T$ whose $k$ leaves are labelled by these other elements $y_1,\ldots ,y_k\in M$. We use the notation $y=T_{\bullet }(y_1,\ldots ,y_k)$. 
Here the subscript $\bullet $ tells us that the grafting of subtrees of $T$ corresponds to the operation $\bullet $. 
\par 
Consider, for example, the element $y=((b \bullet c) \bullet (a \bullet b)) \bullet b $. The corresponding
labelled planar rooted binary tree $T$ is displayed in the following figure.

\begin{figure}[ht]
  \caption{ The element $y=((b \bullet c) \bullet (a \bullet b)) \bullet b =T_{\bullet }(b,c,a,b,b)$}
\begin{center}
\begin{tikzpicture}
   \node (r1) at (0,0) [circle,inner sep=1pt,draw=black!50,fill=black!20]{}; \node[below] at (r1.south) {$b$};  
   \node (r2) at (2,0) [circle,inner sep=1pt,draw=black!50,fill=black!20]{}; \node[below] at (r2.south) {$c$};  
   \node (r3) at (4,0) [circle,inner sep=1pt,draw=black!50,fill=black!20]{}; \node[below] at (r3.south) {$a$};  
   \node (r4) at (6,0) [circle,inner sep=1pt,draw=black!50,fill=black!20]{}; \node[below] at (r4.south) {$b$};  
   \node (r5) at (8,0) [circle,inner sep=1pt,draw=black!50,fill=black!20]{}; \node[below] at (r5.south) {$b$};  
   \node (i12) at (1,1) [circle,inner sep=1pt,draw=black!50,fill=black!20]{}
          edge[thick] (r1)   edge[thick] (r2);     \node[below] at (i12.south) {$\bullet $};  
   \node (i34) at (5,1) [circle,inner sep=1pt,draw=black!50,fill=black!20]{}
          edge[thick] (r3)   edge[thick] (r4);     \node[below] at (i34.south) {$\bullet $};  
   \node (i14) at (3,3) [circle,inner sep=1pt,draw=black!50,fill=black!20]{}
          edge[thick] (i12)   edge[thick] (i34);     \node[below] at (i14.south) {$\bullet $};  
   \node (i15) at (4,4) [circle,inner sep=1pt,draw=black!50,fill=black!20]{}
          edge[thick] (i14)   edge[thick] (r5);   \node[below] at (i15.south) {$\bullet $}; 
\end{tikzpicture}
\end{center}
\end{figure}

It is easy to prove by induction (over the depth of the involved trees) that any magma homomorphism $\beta :(M,\bullet )\rightarrow (N,\circ )$ satisfies
\[ \beta (T_{\bullet }(y_1,\ldots ,y_k))=T_{\circ }(\beta (y_1),\ldots ,\beta (y_k)) \]
for all $y_1,\ldots ,y_k\in M$. 

\begin{prop} \label{LDendo}
Let $(L,*)$ be an LD-system.
Then, for any element $x\in L$, the left multiplication map $\phi _x: y \mapsto x*y$ defines a magma endomorphism of $L$.
\end{prop}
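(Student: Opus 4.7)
The plan is essentially a one-line verification. To prove that $\phi_x$ is a magma endomorphism of $(L,*)$, I would unfold the definition: I need to check that $\phi_x(y*z)=\phi_x(y)*\phi_x(z)$ for all $y,z\in L$. Substituting the definition $\phi_x(w)=x*w$, the required identity becomes $x*(y*z)=(x*y)*(x*z)$, which is precisely the left self-distributivity law that $(L,*)$ satisfies by hypothesis.

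There are no subsidiary conditions to verify: $\phi_x$ clearly sends $L$ to $L$ because $*$ is a binary operation on $L$, so no well-definedness or closure question arises. In particular there is no real obstacle in the argument — the content of the proposition is exactly to reinterpret the LD-law as a statement about the left translations $\phi_x$. I would therefore simply display the chain of equalities $\phi_x(y*z)=x*(y*z)=(x*y)*(x*z)=\phi_x(y)*\phi_x(z)$ and conclude.

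It may be worth a brief remark after the proof that this observation, combined with the general fact stated just before the proposition (any magma homomorphism commutes with tree-shaped iterated products), is what makes the protocol work: applying $\phi_x$ to an element $T_*(y_1,\ldots,y_k)$ built from leaves $y_1,\ldots,y_k$ yields $T_*(x*y_1,\ldots,x*y_k)$, so multiplying on the left by a single element $x$ distributes through any tree expression. This is the structural fact the subsequent KEP will exploit.
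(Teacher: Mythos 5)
Your proof is correct and is exactly the paper's proof: both simply unfold $\phi_x(y*z)=x*(y*z)$ and invoke the left self-distributivity law to get $(x*y)*(x*z)=\phi_x(y)*\phi_x(z)$. The additional remark about compatibility with tree-shaped products is accurate and matches the discussion the paper places just before the proposition.
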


\begin{proof} 
$\phi _x(y_1*y_2)=x*(y_1*y_2)\stackrel{LD}{=}(x*y_1)*(x*y_2)=\phi _x(y_1) * \phi _x(y_2)$. 
\end{proof}

\begin{prop} \label{itLDendo}
Let $(L,*)$ be an LD-system and $k \in \mathbb{N}$.
Then, for all $x_1, \ldots , x_k \in L$, the iterated left multiplication map 
\[ \phi _{x_1,\ldots , x_k}: y \mapsto x_k*(x_{k-1}* \cdots *(x_2 *(x_1*y)) \cdots )\]
defines a magma endomorphism of $L$.
\end{prop}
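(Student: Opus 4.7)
The plan is to proceed by induction on $k$, leveraging Proposition \ref{LDendo} as the base case and the elementary fact that composition of magma endomorphisms is a magma endomorphism.

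First, I would observe the recursive structure: by definition, $\phi_{x_1,\ldots,x_k}(y) = x_k * \phi_{x_1,\ldots,x_{k-1}}(y)$, so that $\phi_{x_1,\ldots,x_k} = \phi_{x_k} \circ \phi_{x_1,\ldots,x_{k-1}}$ as set maps $L \to L$. This reduction is the heart of the argument; everything else is bookkeeping.

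The base case $k=1$ is exactly Proposition \ref{LDendo}. For the induction step, assume $\phi_{x_1,\ldots,x_{k-1}}$ is a magma endomorphism of $(L,*)$. Since $\phi_{x_k}$ is a magma endomorphism by Proposition \ref{LDendo}, and the composition of two magma endomorphisms is again a magma endomorphism (a trivial check: if $\alpha,\beta$ preserve $*$, then $(\alpha\circ\beta)(y_1*y_2) = \alpha(\beta(y_1)*\beta(y_2)) = \alpha(\beta(y_1)) * \alpha(\beta(y_2))$), the map $\phi_{x_1,\ldots,x_k} = \phi_{x_k} \circ \phi_{x_1,\ldots,x_{k-1}}$ is a magma endomorphism as well.

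There is no real obstacle here; the proposition is essentially a formal corollary of Proposition \ref{LDendo} together with the closure of $\mathrm{End}(L,*)$ under composition. The only thing to be mildly careful about is the order of composition — writing out $\phi_{x_1,\ldots,x_k}$ as applying $\phi_{x_1}$ first and $\phi_{x_k}$ last, which matches the nesting $x_k*(\cdots*(x_1*y)\cdots)$ in the statement.
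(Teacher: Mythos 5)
Your proof is correct and is essentially the paper's argument: the paper also inducts on $k$, and its displayed computation (apply the induction hypothesis, then the LD law for $x_k$) is exactly your ``composition of endomorphisms'' step written out inline using Proposition \ref{LDendo}. Your phrasing via closure of $\mathrm{End}(L,*)$ under composition is just a slightly more abstract packaging of the same reasoning.
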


\begin{proof} 
Proof by induction over $k$.
\begin{eqnarray*}
 && \phi _{x_1,\ldots , x_k}(y_1*y_2) =x_k* \phi _{x_1,\ldots , x_{k-1}}(y_1*y_2) \stackrel{IH}{=}
 x_k * (\phi _{x_1,\ldots , x_{k-1}}(y_1)*  \phi _{x_1,\ldots , x_{k-1}}(y_2)) \\
 &\stackrel{LD}{=}& (x_k*  \phi _{x_1,\ldots , x_{k-1}}(y_1)) * (x_k*  \phi _{x_1,\ldots , x_{k-1}}(y_2))=
  \phi _{x_1,\ldots , x_k}(y_1)* \phi _{x_1,\ldots , x_k}(y_2). 
\end{eqnarray*}
\end{proof}

We are going to describe a KEP that applies to any LD-system $(L,*)$.
There are two public submagmas $S_A=\langle s_1, \cdots , s_m \rangle _*$, $S_B=\langle t_1, \cdots , t_n \rangle _*$ of $(L,*)$, assigned to Alice and Bob.
Alice and Bob perform the following protocol steps.

\begin{description}
\item[{\bf Protocol 1}] {\sc Key establishment for any LD-system} $(L,*)$.
\item[{\rm 1}] Alice generates her secret key $(a_0,a_1, \ldots , a_{k_A}) \in S_A \times L^{k_A}$, and Bob chooses his secret key $b\in S_B^{k_B}$.
In particular, Alice's and Bob's secret magma morphisms $\alpha $ and $\beta $ are given by
\begin{eqnarray*} 
\alpha (y)&=&a_{k_A}*(a_{k_A-1}* \cdots *(a_2 *(a_1*y)) \cdots ) \quad {\rm and} \\ 
\beta (y)&=& b_{k_B}*(b_{k_B-1}* \cdots *(b_2 *(b_1*y)) \cdots ), 
\end{eqnarray*}
respectively.
\item[{\rm 2}] Alice computes the elements 
$(\alpha (t_i))_{1\le i \le n} \in L^n, p_0= \alpha (a_0) \in L$, and sends them to Bob. 
Bob computes the vector $(\beta (s_j))_{1\le j \le m} \in L^m$, and sends it to Alice. 
\item[{\rm 3}] Alice, knowing $a_0=T_*(r_1, \ldots , r_l)$ with $r_i\in \{s_1,\ldots ,s_m\}$, computes from the received message
\[ T_*( \beta (r_1), \ldots , \beta (r_l))=\beta (T_*(r_1, \ldots , r_l))=\beta (a_0). \]
And Bob, knowing for all $1\le j \le k_B$, $b_j=T^{(j)}_*(u_{j,1}, \ldots , u_{j,l_j})$ with $u_{j,i}\in \{t_1,\ldots ,t_n\} \forall i\le l_j$ for some
$l_j \in \mathbb{N}$, computes from his received message for all $1\le j \le k_B$
\[ T^{(j)}_*(\alpha (u_{j,1}), \ldots , \alpha (u_{j,l_j}))=\alpha (T^{(j)}_*(u_{j,1}, \ldots , u_{j,l_j})=\alpha (b_j). \]
\item[{\rm 4}] Alice computes $K_A=\alpha (\beta (a_0))$.
Bob gets the shared key by 
\[ K_B:=\alpha (b_{k_B})*( \alpha (b_{k_B-1})*( \cdots (\alpha (b_2)*(\alpha (b_1)*p_0)) \cdots ))\stackrel{(LD)}{=}K_A. \]
\end{description}

This protocol is an iterated version of Protocol 1 in \cite{KT13} and an
asymmetric modification of the Anshel-Anshel-Goldfeld protocols for magmas introduced in \cite{Ka07, Ka12}. 

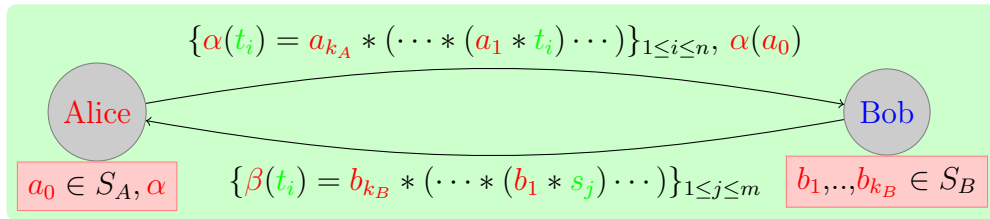
\begin{figure}[ht]
  \caption{\sc Protocol 1: Key establishment for any LD-system - iterated version.}
\begin{center} 
 \begin{tikzpicture}
 \node[red]  (Alice) at ( 0,0) [circle,draw=black!50,fill=black!20]{Alice};
 \node[blue]  (Bob) at ( 10.5,0) [circle,draw=black!50,fill=black!20]{Bob}
   edge [<-, bend right=10] node[auto,swap] (pA) {$\{ {\color{red}\alpha } ({\color{green}t_i})={\color{red}a_{k_A} }* (\cdots *({\color{red}a_1 } *{\color{green}t_i})\cdots ) \}_{1\le i \le n}, \, {\color{red}\alpha }({\color{red}a_0}) $} (Alice)   
   edge [->, bend left=10] node[auto] (pB) { $\{ {\color{red}\beta } ({\color{green}t_i})={\color{red}b_{k_B} }*(\cdots *({\color{red}b_1 } *{\color{green}s_j}) \cdots )\}_{1\le j \le m} $} (Alice) ;
 \node (skA) [below] at (Alice.south) [rectangle,draw=red!50,fill=red!20]{${\color{red}a_0}\in S_A, {\color{red}\alpha }$};
 \node (skB) [below] at (Bob.south) [rectangle,draw=red!50,fill=red!20]{${\color{red}b_1}$,..,${\color{red}b_{k_B}}\in S_B$};
 \begin{pgfonlayer}{background}
  \node [fill=green!20, rounded corners, fit=(Alice) (Bob) (skA) (skB) (pA) (pB)] {};
 \end{pgfonlayer} 
 \end{tikzpicture}
\end{center}
\end{figure}

\subsection{Base problems} 
In order to break Protocol 1 an attacker has to find the shared key $K=K_A=K_B$.
A successful attack on Bob's secret key $b$ requires the solution of
\begin{list}{}{\setlength{\itemsep}{0cm} \setlength{\parsep}{0cm} }
\item[{\bf $m$-simItLDP} ($m$-simultaneous iterated LD-Problem):]
\item[{\sc Input:}]  Element pairs $(s_1,s'_1),\ldots ,(s_m,s'_m)\in L^2$ with 
$s'_i=\phi _{b_1,\ldots , b_{k_B}}(s_i)$ $\forall 1\le i\le m$ for some (unknown) $b_1, \ldots , b_{k_B} \in L$ and some $k_B\in \mathbb{N}$.  
\item[{\sc Objective:}] Find $k'_B\in \mathbb{N}$, $b'_1, \ldots , b'_{k_B} \in L$  such that
\[ s'_i=b'_{k'_B}*(b'_{k'_B-1}* (\cdots *(b'_2 *(b'_1*s_i)) \cdots )) \quad \forall \, i=1,\ldots ,m.  \]
\end{list}

Note that in our context, $b$ comes from a restricted domain, namely $S_B^{k_B} \subseteq L^{k_B}$. 
This might affect distributions when one considers possible attacks.
Nevertheless, we use the notion of (simultaneous) iterated LD-Problem for inputs generated by potentially arbitrary $b\in L^{k_B}$. 
Similar remarks affect base problems further in the text. 
\par 
Even if an attacker finds Bob's original key $b\in S_B^{k_B}$ or a pseudo-key $b' \in S_B^{k'_B} \subseteq \bigcup _{i=1}^{\infty } S_B^i$ 
(solution to the $m$-simItLDP above), then she still faces the following problem for all $i=1, \ldots , k'_B$.
\begin{list}{}{\setlength{\itemsep}{0cm} \setlength{\parsep}{0cm} }
\item[{\bf $*$-MSP} ($*$-submagma Membership Search Problem):]
\item[{\sc Input:}] $t_1,\ldots ,t_n\in (L,*)$, $b'_i\in \langle t_1, \ldots , t_n \rangle _*$.
\item[{\sc Objective:}] Find an expression of $b'_i$ as a tree-word in the submagma $\langle t_1,\ldots ,t_n \rangle _*$ (notation $b'_i=T_*(u_1,\ldots ,u_l)$ for $u_1, \ldots , u_l \in \{ t_j \} _{j\le n}$).   
\end{list}

\begin{prop}  \label{BaseProbs1KEP1}
Let $(L,*)$ be an LD-system. We define the generalized $m$-simItLDP for $S_B\subseteq L$ as an $m$-simultaneous iterated LD-Problem 
with the objective to find $k'_B \in \mathbb{N}$ and  $b'\in S_B^{k'_B}$ with $S_B=\langle t_1, \ldots , t_n \rangle _*$ such that 
$\phi _{b'}:=\phi _{b'_1,\ldots , b'_{k'_B}}(s_i)=s'_i$ for all $i\le m$. 
\par An oracle that solves the generalized $m$-simItLDP and $*$-MSP for $S_B$ is sufficient to break key establishment Protocol 1.
\end{prop}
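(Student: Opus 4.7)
The plan is to show that, given oracles for the generalized $m$-simItLDP and $*$-MSP for $S_B$, an attacker can recreate Bob's calculation of $K_B$ using a pseudo-key in place of Bob's actual secret key. Throughout I rely on the fact that $\alpha$ and the iterated multiplication maps $\phi_{b'_1,\ldots,b'_{k'_B}}$ are magma endomorphisms of $(L,*)$, which is exactly Proposition \ref{itLDendo}.

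Step 1: feed the oracle for the generalized $m$-simItLDP the pairs $(s_j,\beta(s_j))_{1\le j\le m}$, which are available from Alice's public generators of $S_A$ and from Bob's transmitted vector. The oracle returns a pseudo-key $b'=(b'_1,\ldots,b'_{k'_B})\in S_B^{k'_B}$ with $\phi_{b'}(s_j)=\beta(s_j)$ for every $j\le m$. Step 2: apply the $*$-MSP oracle to each $b'_i$ to obtain a tree-word decomposition $b'_i=T^{(i)}_*(u_{i,1},\ldots,u_{i,l_i})$ in the generators $t_1,\ldots,t_n$ of $S_B$. Step 3: using the magma-morphism identity $\alpha(T_*(y_1,\ldots,y_l))=T_*(\alpha(y_1),\ldots,\alpha(y_l))$ together with Alice's publicly transmitted values $\alpha(t_1),\ldots,\alpha(t_n)$, compute $\alpha(b'_i)=T^{(i)}_*(\alpha(u_{i,1}),\ldots,\alpha(u_{i,l_i}))$ for each $i\le k'_B$. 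Step 4: output
\[ K':=\alpha(b'_{k'_B})*\bigl(\alpha(b'_{k'_B-1})*\bigl(\cdots *(\alpha(b'_1)*p_0)\cdots \bigr)\bigr), \]
mimicking exactly Bob's final computation but with $b'$ in place of $b$.

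It remains to verify $K'=K$. On the one hand, since $\phi_{b'}$ and $\beta=\phi_{b_1,\ldots,b_{k_B}}$ are both magma endomorphisms of $(L,*)$ (Proposition \ref{itLDendo}) that agree on the generators $s_1,\ldots,s_m$ of $S_A$, they coincide on the entire submagma $S_A$; in particular $\phi_{b'}(a_0)=\beta(a_0)$ because $a_0\in S_A$. On the other hand, applying the magma morphism $\alpha$ (Proposition \ref{LDendo}) to this element, one gets
\[ \alpha(\phi_{b'}(a_0))=\alpha(b'_{k'_B})*\bigl(\cdots *(\alpha(b'_1)*\alpha(a_0))\cdots \bigr)=K', \]
while simultaneously $\alpha(\phi_{b'}(a_0))=\alpha(\beta(a_0))=K_A=K$. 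Hence $K'=K$ and the attack succeeds.

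The only real content is the verification in the last paragraph; everything else is bookkeeping. I expect no significant obstacle, since the two oracle calls are precisely tailored to supply the two pieces of information (a pseudo-key in $S_B^{k'_B}$ and its tree-word expansion in the generators $t_j$) that Bob himself exploits in his key-derivation step. The only subtle point worth emphasising is that the attacker never needs to recover Alice's secret $a_0$ or $\alpha$ explicitly: she bypasses $\alpha$ entirely by using the publicly broadcast values $\alpha(t_j)$ and $p_0=\alpha(a_0)$, invoking the magma-morphism identity to push $\alpha$ through the tree structure, and relying on left self-distributivity (via Proposition \ref{itLDendo}) to guarantee that equality on generators of $S_A$ propagates to equality at $a_0$.
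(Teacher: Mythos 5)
Your proposal is correct and follows essentially the same route as the paper's own proof: obtain a pseudo-key $b'\in S_B^{k'_B}$ from the generalized $m$-simItLDP oracle, note that $\phi_{b'}$ and $\beta$ agree on all of $S_A$ (hence on $a_0$), expand each $b'_i$ as a tree-word via the $*$-MSP oracle so that $\alpha(b'_i)$ can be computed from the public values $\alpha(t_j)$, and then reproduce Bob's key derivation with $p_0$ in place of $\alpha(a_0)$. The only (harmless) difference is that you make explicit the "two endomorphisms agreeing on generators agree on the generated submagma" argument that the paper leaves as an observation.
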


\begin{proof} 
As outlined above, we perform an attack on Bob's private key. The generalized $m$-simItLDP oracle provides a pseudo-key vector
$b'\in S_B^{k'_B}$ with  $\phi _{b'}(s_i)=s'_i=\phi _b(s_i)$ for all $i=1,\ldots ,m$. 
Observe that this implies for any element $e_A\in S_A$ that $\phi _{b'} (e_A)= \phi _b (e_A)$.
In particular, we have $\phi _{b'}(a_0)=\phi _b(a_0)$.
For $i=1, \ldots , k'_B$, we feed each pseudo-key component $b'_i$ into a $*$-MSP oracle for $S_B$ which returns a treeword  
$T^{(i)'}_*(u_{i,1},\ldots ,u_{i,l_i})=b'$ (for some $l\in \mathbb{N}$ and $u_{i,j} \in \{ t_k \} _{k\le n}$). Now compute, for each $1\le i \le k'_B$, 
\[
T^{(i)'}_*(\alpha (u_{i,1}),\ldots , \alpha (u_{i,l_i})) \stackrel{LD}{=} \alpha (T^{(i)'}_*(u_{i,1},\ldots ,u_{i,l_i}))= 
\alpha (b'_i).
\]
This enables us to compute
\begin{eqnarray*}
K'_B &=& \alpha (b'_{k'_B})*(\alpha (b'_{k'_B-1})* (\cdots *(\alpha (b'_2) *(\alpha (b'_1)*p_0)) \cdots )) \\
&\stackrel{\alpha \,\, {\rm hom}}{=}& \alpha (b'_{k'_B}*(b'_{k'_B-1}* (\cdots *(b'_2 *(b'_1*a_0)) \cdots ))=\alpha (\phi _{b'}(a_0))=
\alpha (\beta (a_0))=K_A.
\end{eqnarray*}
\end{proof}

Note that here the situation is asymmetric - an attack on Alice's secret key requires the solution of the following problem.
\begin{list}{}{\setlength{\itemsep}{0cm} \setlength{\parsep}{0cm} }
\item[{\bf $n$-modsimItLDP} (Modified $n$-simultaneous iterated LD-Problem):]
\item[{\sc Input:}]  An element $p_0\in L$ and pairs $(t_1,t'_1),\ldots ,(t_n,t'_n)\in L^2$ with 
$t'_i=\phi _a(t_i)$ $\forall 1\le i\le n$ for some (unknown) $k_A \in \mathbb{N}$ and $a \in L^{k_A}$.  
\item[{\sc Objective:}] Find $k'_A\in \mathbb{N}$ and elements $a'_0, a'\in L^{k'_A}$ such that $p_0=\phi _{a'}(a'_0)$ and 
$\phi _{a'}(t_i)=t'_i$ for all $i=1,\ldots ,n$.
\end{list}

Also here, even if an attacker finds Alice's original key $(a_0,a)$ or a pseudo-key $(a'_0, a') \in S_A \times L$, 
then she still faces a $*$-submagma Membership Search Problem.

\begin{prop} \label{BaseProbs2KEP1}
Let $(L,*)$ be an LD-system. We define the generalized $n$-modsimItLDP for $S_A\subseteq L$ as a modified $n$-simultaneous iterated LD-Problem 
with the objective to find $k'_A\in \mathbb{N}$, $a'\in L^{k'_A}$ and $a'_0$ in $S_A=\langle s_1, \ldots , s_m \rangle _*$ 
such that $\phi _{a'}(a'_0)=p_0$ and $\phi _{a'}(t_i)=t'_i$ for all $i\le n$. 
\par An oracle that solves the generalized $n$-modsimLDP and $*$-MSP for $S_A$ is sufficient to break key establishment Protocol 1.
\end{prop}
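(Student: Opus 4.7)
The plan is to mirror the proof of Proposition~\ref{BaseProbs1KEP1}, shifting the attack from Bob's secret to Alice's. First I would feed the public tuple $(p_0,(t_1,t'_1),\ldots ,(t_n,t'_n))$ into the generalized $n$-modsimItLDP oracle to obtain a pseudo-seed $a'_0\in S_A$ together with a pseudo-exponent vector $a'\in L^{k'_A}$ whose induced iterated left multiplication map $\alpha ':=\phi _{a'}$ satisfies $\alpha '(t_i)=t'_i=\alpha (t_i)$ for every $i\le n$ and $\alpha '(a'_0)=p_0=\alpha (a_0)$.

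The crucial observation is that although $\alpha '$ need not coincide with Alice's true $\alpha $ on all of $L$, both maps are magma endomorphisms of $(L,*)$ by Proposition~\ref{itLDendo}. Since they agree on the generators $\{t_1,\ldots ,t_n\}$ of $S_B$, an easy induction on tree depth forces $\alpha '=\alpha $ on the whole submagma $S_B=\langle t_1,\ldots ,t_n\rangle _*$; in particular $\alpha '(b_j)=\alpha (b_j)$ for every component of Bob's (unknown) secret vector $b$.

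Next I would feed $a'_0$ to the $*$-MSP oracle for $S_A$, obtaining a tree expression $a'_0=T'_*(r'_1,\ldots ,r'_{l'})$ with each $r'_j\in \{s_1,\ldots ,s_m\}$. Applying this tree to Bob's public message $(\beta (s_j))_{j\le m}$ and using that $\beta $ is a magma endomorphism yields
\[ z:=T'_*(\beta (r'_1),\ldots ,\beta (r'_{l'}))=\beta (T'_*(r'_1,\ldots ,r'_{l'}))=\beta (a'_0). \]
The candidate shared key is then $K':=\alpha '(z)$, which the attacker evaluates directly from her knowledge of $a'$.

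Verifying $K'=K_A$ is the routine finish: expanding $\beta (a'_0)$ as $b_{k_B}*(\cdots *(b_1*a'_0)\cdots )$, pushing $\alpha '$ inwards via left self-distributivity, and then replacing each $\alpha '(b_j)$ by $\alpha (b_j)$ and $\alpha '(a'_0)$ by $p_0$ using the two agreements above, reproduces exactly Bob's formula $\alpha (b_{k_B})*(\cdots *(\alpha (b_1)*p_0)\cdots )=K_B=K_A$. I expect no real obstacle in the calculation; the subtle point worth emphasizing is the double use of the ``same values on generators implies same values on the generated submagma'' principle — once at the outset to obtain $\alpha '|_{S_B}=\alpha |_{S_B}$, which is precisely what allows the attacker to proceed without ever recovering the secret components $b_j$ themselves.
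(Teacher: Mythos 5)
Your proposal is correct and follows essentially the same route as the paper's own proof: query the generalized $n$-modsimItLDP oracle for a pseudo-key $(a'_0,a')$, observe that agreement of $\phi_{a'}$ and $\alpha$ on the generators $t_1,\ldots,t_n$ propagates to all of $S_B$ (hence to the $b_j$), use the $*$-MSP oracle on $a'_0$ to compute $\beta(a'_0)$ from Bob's public message, and finish by expanding $\phi_{a'}(\beta(a'_0))$ via the endomorphism property to recover $K_B$. No gaps.
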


\begin{proof} 
As outlined above, we perform an attack on Alice's private key. The generalized $n$-modsimItLDP oracle provides a pseudo-key $(a'_0, a')'\in S_A \times L^{k'_A}$ (for some $k'_A\in \mathbb{N}$) such that $\phi _{a'}(a'_0)=p_0$ and $\phi _{a'}(t_i)=a'_i=a*t_i$ for all $i=1,\ldots ,n$. Observe that this implies for any element $e_B\in S_B$ that $\phi _{a'}(e_B)=\phi _a(e_B)=:\alpha (e_B)$.
In particular, we have $\phi _{a'}(b_j)=\alpha (b_j)$ for all $1\le j \le k_B$.
We feed the first component $a'_0 \in S_A$ of this pseudo-key into a $*$-MSP oracle for $S_A$ which returns a treeword  $T'_*(r_1,\ldots ,r_l)=a'_0$ (for some $l\in \mathbb{N}$ and $r_i \in \{ s_j \} _{j\le m}$). Now, we compute 
\begin{eqnarray*}
K'_A &=& \phi _{a'}(T'_*(\beta (r_1),\ldots ,\beta (r_l))) \stackrel{LD}{=} \phi _{a'}(\beta (T'_*(r_1,\ldots ,r_l)))= \phi _{a'}(\beta (a'_0)) \\
&=& \phi _{a'}(b_{k_B}* (\cdots *(b_2 *(b_1*a'_0))\cdots ))  \\
&=& \phi _{a'}(b_{k_B})* (\cdots *(\phi _{a'}(b_2) *(\phi _{a'}(b_1)*\phi _{a'}(a'_0)))\cdots ) \\
&=&   \phi _{a}(b_{k_B})* (\cdots *(\phi _{a}(b_2) *(\phi _{a}(b_1)*p_0))\cdots )=K_B.  
\end{eqnarray*}
\end{proof}

Both appproaches described above require the solution of a $*$-submagma Membership Search Problem. Note that we assumed that the generalized $m$-simItLDP 
(resp. $n$-modsimItLDP) oracle already provides a pseudo-key in the submagma $S_B$ (resp. $S_A$) which we feed to the $*$-MSP oracle.
But to check whether an element lies in some submagma, i.e. the $*$-submagma Membership Decision Problem, is already undecidable in general. 
\par Fortunately, for the attacker, there are approaches which do not resort to solving the $*$-MSP.

\medskip

Recall that we defined the generalized $m$-simItLDP for $S_B\subseteq L$ as an $m$-simultaneous iterated LD-Problem 
with the objective to find $k'_B\in \mathbb{N}$ and $b'$ in $S_B^{k'_B}$ such that $\phi _{b'}(s_i)=s'_i$ for all $i\le m$.
\begin{prop}  \label{BaseProbs3KEP1} 
A generalized simItLDP oracle is sufficient to break key establishment Protocol 1.
More precisely, an oracle that solves the generalized $m$-simItLDP for $S_B$ {\it and} the $n$-simItLDP is sufficient to break Protocol 1.
\end{prop}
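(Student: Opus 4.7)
The plan is to mimic Propositions \ref{BaseProbs1KEP1} and \ref{BaseProbs2KEP1}, but to replace every invocation of the $*$-submagma Membership Search Problem by a second call to a simultaneous iterated LD oracle---this time of the unrestricted $n$-simItLDP variant. This way the attacker never has to exhibit an explicit tree expression for any element.

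First I would feed the pairs $(s_j, \beta(s_j))_{j=1}^m$, obtained from Bob's public transmission, into the generalized $m$-simItLDP oracle for $S_B$. This produces a pseudo-key $b' = (b'_1, \ldots, b'_{k'_B}) \in S_B^{k'_B}$ with $\phi_{b'}(s_j) = \beta(s_j)$ for all $j \le m$. Since both $\phi_{b'}$ and $\beta$ are magma endomorphisms of $(L,*)$ by Proposition \ref{itLDendo}, and they coincide on the generating set $\{s_1, \ldots, s_m\}$ of $S_A$, they must in fact agree on all of $S_A$; in particular, $\phi_{b'}(a_0) = \beta(a_0)$.

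Next I would feed the pairs $(t_i, \alpha(t_i))_{i=1}^n$, available from Alice's public transmission, into the unrestricted $n$-simItLDP oracle. This returns a tuple $a' \in L^{k'_A}$ with $\phi_{a'}(t_i) = \alpha(t_i)$ for all $i \le n$. Crucially, $\phi_{a'}$ is a magma morphism by Proposition \ref{itLDendo}, so since it agrees with $\alpha$ on the generating set of $S_B$ it must agree with $\alpha$ on the whole of $S_B$. In particular $\phi_{a'}(b'_j) = \alpha(b'_j)$ for every component $b'_j$ of the pseudo-key produced above---and this is precisely what the $*$-MSP call was doing in the earlier proofs, now handled implicitly by a morphism argument.

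Combining the two pieces, the attacker recovers the shared key by computing $K' = \phi_{a'}(b'_{k'_B}) * (\cdots * (\phi_{a'}(b'_1) * p_0) \cdots)$, which by the previous paragraph equals $\alpha(b'_{k'_B}) * (\cdots * (\alpha(b'_1) * \alpha(a_0)) \cdots) = \alpha(\phi_{b'}(a_0)) = \alpha(\beta(a_0)) = K_A$, using that $\alpha$ is a magma morphism and that $\phi_{b'}(a_0) = \beta(a_0)$. The only real conceptual step is the observation that the second oracle call simultaneously evaluates $\alpha$ on every element of $S_B$, which is exactly the role $*$-MSP played before; the remaining manipulations are routine applications of Proposition \ref{itLDendo}.
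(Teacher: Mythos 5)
Your proposal is correct and follows essentially the same route as the paper's own proof: both call the generalized $m$-simItLDP oracle for $S_B$ to get $b'\in S_B^{k'_B}$, call the unrestricted $n$-simItLDP oracle to get $a'$, use the fact that magma morphisms agreeing on the generators of a submagma agree on the whole submagma (so $\phi_{a'}(b'_i)=\alpha(b'_i)$ and $\phi_{b'}(a_0)=\beta(a_0)$), and then assemble $K_A$ by the same chain of equalities. Your explicit remark that the morphism argument replaces the $*$-MSP call is exactly the point the paper is making implicitly.
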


\begin{proof} 
Here we perform attacks on Alice's and Bob's private keys - though we need only a pseudo-key for the second component $a'$ of Alice's key. 
The $n$-simItLDP oracle provides $a'\in L^{k'_A}$ s.t. $\phi _{a'}(t_j)=t'_j=\alpha (t_j)$ for all $j\le n$. 
And the generalized $m$-simItLDP oracle returns the pseudo-key $b'\in S_B^{k'_B}$
s.t. $\phi _{b'}(s_i)=s'_i=\beta (s_i)$ for all $i\le m$. Since $b'\in S_B^{k'_B}$, we conclude that $\phi _{a'}(b'_i)=\alpha (b'_i)$ for all $1\le i \le k'_B$.
Also, $a_0\in S_A$ implies, of course, $\phi _{b'}(a_0)=\beta (a_0)$.
Now, we may compute 
\begin{eqnarray*}
K'_B&=& \phi _{a'}(b'_{k'_B})*( \cdots *(\phi _{a'}(b'_2) *(\phi _{a'}(b'_1)*p_0)) \cdots ) \\
&=& \alpha (b'_{k'_B})*( \cdots *(\alpha (b'_2) *(\alpha (b'_1)* \alpha (a_0))) \cdots ) \\
&\stackrel{\alpha \,\, {\rm hom}}{=}& \alpha (b'_{k'_B}*(\cdots *(b'_2 *(b'_1*a_0)) \cdots )=\alpha (\phi _{b'}(a_0))=
\alpha (\beta (a_0))=K_A.
\end{eqnarray*}
\end{proof}

Recall that we defined the generalized $n$-modsimItLDP for $S_A\subseteq L$ as an $n$-simultaneous iterated LD-Problem 
with the objective to find a $a'_0$ in $S_A=\langle s_1, \ldots , s_m \rangle _*$ (and $a'\in L^{k'_A}$) such that $\phi _{a'}(a'_0)=p_0$ etc.

\begin{prop}  \label{BaseProbs4KEP1} 
An oracle that solves the generalized $n$-modsimItLDP for $S_A$ {\it and} the $m$-simItLDP is sufficient to break Protocol 1.
\end{prop}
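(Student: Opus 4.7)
The plan is to mirror the proof of Proposition~\ref{BaseProbs3KEP1}, this time exploiting the restriction $a'_0\in S_A$ in place of the restriction $b'\in S_B^{k'_B}$. First I would query the two oracles on the protocol transcript. The generalized $n$-modsimItLDP oracle for $S_A$ returns a pseudo-key $(a'_0,a')$ with $a'_0\in S_A$, $a'\in L^{k'_A}$, $\phi_{a'}(a'_0)=p_0$, and $\phi_{a'}(t_j)=t'_j=\alpha(t_j)$ for all $j\le n$; the $m$-simItLDP oracle returns $b'\in L^{k'_B}$ with $\phi_{b'}(s_i)=s'_i=\beta(s_i)$ for all $i\le m$. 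By Proposition~\ref{itLDendo}, both $\phi_{a'}$ and $\phi_{b'}$ are magma endomorphisms of $L$, so agreement on the respective generators of $S_A$ and $S_B$ extends to $\phi_{a'}|_{S_B}=\alpha|_{S_B}$ and $\phi_{b'}|_{S_A}=\beta|_{S_A}$; in particular $\phi_{b'}(a'_0)=\beta(a'_0)$, and $\phi_{a'}(b_j)=\alpha(b_j)$ for each (unknown) component $b_j\in S_B$ of Bob's real key.

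The attacker then computes, as in Proposition~\ref{BaseProbs3KEP1}, the candidate
\[ K'=\phi_{a'}(b'_{k'_B})*(\cdots *(\phi_{a'}(b'_1)*p_0)\cdots) \]
from publicly available data and the oracle outputs. To verify $K'=K_A$, I would carry out a two-way expansion of $\phi_{a'}(\beta(a'_0))$. Substituting $p_0=\phi_{a'}(a'_0)$ and pulling $\phi_{a'}$ outside via its magma-endomorphism property gives
\[ K'=\phi_{a'}\bigl(b'_{k'_B}*\cdots *(b'_1*a'_0)\bigr)=\phi_{a'}(\phi_{b'}(a'_0))=\phi_{a'}(\beta(a'_0)), \]
where the last step uses $a'_0\in S_A$ together with $\phi_{b'}|_{S_A}=\beta|_{S_A}$. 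Re-expanding $\beta(a'_0)=b_{k_B}*\cdots *(b_1*a'_0)$ through Bob's actual $b_j$, pushing $\phi_{a'}$ back inside, and applying $\phi_{a'}(b_j)=\alpha(b_j)$ and $\phi_{a'}(a'_0)=p_0$ yields $K'=\alpha(b_{k_B})*(\cdots *(\alpha(b_1)*p_0)\cdots)=K_B=K_A$.

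The main subtlety is precisely the identity $\phi_{a'}(\beta(a'_0))=K_A$. The pseudo-endomorphism $\phi_{a'}$ is only known to coincide with $\alpha$ on $S_B$, whereas $\beta(a'_0)$ need not lie in $S_B$; the equality is rescued only because $\beta$ can be re-expressed as an iterated $*$-multiplication by elements of $S_B$, so that the agreement $\phi_{a'}|_{S_B}=\alpha|_{S_B}$ can be applied componentwise after distributing $\phi_{a'}$ through left self-distributivity. Bob's real $b_j$ are never computed by the attacker; only their existence in $S_B$ is invoked for the correctness proof. This is precisely the trick used on the other side of the protocol in Proposition~\ref{BaseProbs3KEP1}.
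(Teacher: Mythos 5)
Your proposal is correct and follows essentially the same route as the paper: both oracles are queried, the magma-endomorphism property from Proposition~\ref{itLDendo} upgrades agreement on generators to agreement on the submagmas (giving $\phi_{b'}(a'_0)=\beta(a'_0)$ and $\phi_{a'}(b_j)=\alpha(b_j)$), and the key is recovered by collapsing to $\phi_{a'}(\beta(a'_0))$ and re-expanding through Bob's actual (uncomputed) key components. The "two-way expansion" you highlight is exactly the computation in the paper's proof of Proposition~\ref{BaseProbs4KEP1}.
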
 

\begin{proof}
Also here we perform attacks on Alice's and Bob's private keys.  
The $m$-simItLDP oracle provides $k'_B\in \mathbb{N}$ and $b'\in L^{k'_B}$ s.t. $\phi _{b'}(s_j)=s'_j=\beta (s_j)$ for all $j\le m$. 
And the generalized $n$-modsimItLDP oracle returns the pseudo-key $(a'_0, a')\in S_A \times L^{k'_A}$
s.t. $\phi _{a'}(t_i)=t'_i=\alpha (t_i)$ for all $i\le n$ {\it and} $\phi _{a'}(a'_0)=p_0$. Since $a'_0\in S_A$, we conclude that $\phi _{b'}(a'_0)=\beta (a'_0)$.
Also, $b\in S_B^{k'_B}$ implies, of course, $\phi _{a'}(b_j)=\alpha (b_j)$ for all $1\le j \le k'_A$.
Now, we compute 
\begin{eqnarray*}
K'_A &=& \phi _{a'}( \phi _{b'}(a'_0)) =\phi _{a'}(b'_{k'_B}* (\cdots *(b'_2 *(b'_1*a'_0)) \cdots ))  \\
&=& \phi _{a'}(b_{k_B})* (\cdots *(\phi _{a'}(b_2) *(\phi _{a'}(b_1)*\phi _{a'}(a'_0)))\cdots ) \\
&=&   \alpha (b_{k_B})* (\cdots *(\alpha (b_2) *(\alpha (b_1)*p_0))\cdots )=K_B. 
\end{eqnarray*}
\end{proof}

\begin{rem}
Note that in the non-associative setting the case $m=n=1$ is of particular interest, i.e. we may \emph{abandon simultaneity} in our base problems
since the submagmas generated by one element are still complicated objects. 
\end{rem}

\section{Key establishment for mutual left distributive systems}
\subsection{The protocol}  \label{KEPpartial}
Here we describe a generalization of Protocol 1 that works for all mutual left distributive systems, in particular all multi-LD-systems.
Consider a set $L$ equipped with a pool of binary operations $O_A \cup O_B$ ($O_A$ and $O_B$ non-empty) s.t.
the operations in $O_A$ are distributive over those in $O_B$ and vice versa, i.e. the following holds
for all $x,y,z\in L$, $*_{\alpha } \in O_A$ and $*_{\beta }\in O_B$.
\begin{eqnarray}
  x*_{\alpha }(y*_{\beta }z)&=&(x*_{\alpha }y)*_{\beta }(x*_{\alpha }z), \,\, {\rm and}    \label{abLD} \\
  x*_{\beta }(y*_{\alpha }z)&=&(x*_{\beta }y)*_{\alpha }(x*_{\beta }z).                    \label{baLD}
\end{eqnarray}
Then $(L, *_{\alpha }, *_{\beta })$ is a mutual left distributive system for all $(*_{\alpha },*_{\beta })\in O_A \times O_B$.
Note that, if $O_A \cap O_B \ne \emptyset$, then $(L, O_A \cap O_B)$ is a multi-LD-system. 
\par Let $s_1, \ldots , s_m, t_1, \ldots , t_n\in L$ be some public elements. We denote
$S_A=\langle s_1, \cdots , s_m \rangle _{O_A}$ and $S_B=\langle t_1, \cdots , t_n \rangle _{O_B}$, two submagmas of $(L,O_A\cup O_B)$.
For example, an element $y$ of $S_A$ can be described by a planar rooted binary
tree $T$ whose $k$ leaves are labelled by these other elements $r_1,\ldots ,r_k$ with $r_i \in \{s_i\}_{i\le m}$.
Here the tree contains further information, namely to each internal vertex we assign a binary operation $*_i \in O_A$.
We use the notation $y=T_{O_A}(r_1,\ldots ,r_k)$. 
The subscript $O_A $ tells us that the grafting of subtrees of $T$ corresponds to the operation $*_i\in O_A$.
Consider, for example, the element $y=((s_3*_{\alpha _2}s_3)*_{\alpha _4}s_1)*_{\alpha _1}(s_2*_{\alpha _2}s_1)$. The corresponding
labelled planar rooted binary tree $T$ is displayed in the following figure.

\begin{figure}[ht]
  \caption{ The element $y=((s_3*_{\alpha _2}s_3)*_{\alpha _4}s_1)*_{\alpha _1}(s_2*_{\alpha _2}s_1) \in S_A$}
\begin{center}
\begin{tikzpicture}
   \node (r1) at (0,0) [circle,inner sep=1pt,draw=black!50,fill=black!20]{}; \node[below] at (r1.south) {$s_3$};  
   \node (r2) at (2,0) [circle,inner sep=1pt,draw=black!50,fill=black!20]{}; \node[below] at (r2.south) {$s_3$};  
   \node (r3) at (4,0) [circle,inner sep=1pt,draw=black!50,fill=black!20]{}; \node[below] at (r3.south) {$s_1$};  
   \node (r4) at (6,0) [circle,inner sep=1pt,draw=black!50,fill=black!20]{}; \node[below] at (r4.south) {$s_2$};  
   \node (r5) at (8,0) [circle,inner sep=1pt,draw=black!50,fill=black!20]{}; \node[below] at (r5.south) {$s_1$};  
   \node (i12) at (1,1) [circle,inner sep=1pt,draw=black!50,fill=black!20]{}
          edge[thick] (r1)   edge[thick] (r2);     \node[below] at (i12.south) {$*_{\alpha _2}$};  
   \node (i13) at (2,2) [circle,inner sep=1pt,draw=black!50,fill=black!20]{}
          edge[thick] (i12)   edge[thick] (r3);     \node[below] at (i13.south) {$*_{\alpha _4}$};  
   \node (i45) at (7,1) [circle,inner sep=1pt,draw=black!50,fill=black!20]{}
          edge[thick] (r4)   edge[thick] (r5);     \node[below] at (i45.south) {$*_{\alpha _2}$};  
   \node (i15) at (4,4) [circle,inner sep=1pt,draw=black!50,fill=black!20]{}
          edge[thick] (i13)   edge[thick] (i45);   \node[below] at (i15.south) {$*_{\alpha _1}$}; 
\end{tikzpicture}
\end{center}
\end{figure}
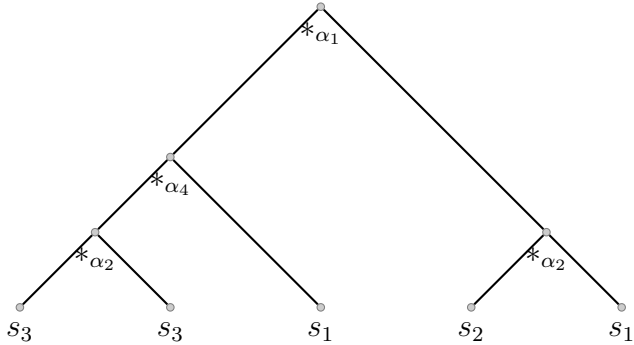

Let $*_{\alpha }\in O_A$ and $*_{\beta }\in O_B$. By induction over the tree depth, it is easy to show that, for all elements $e, e_1, \ldots , e_l \in (L, O_A \cup O_B)$ and all planar rooted binary trees $T$ with $l$ leaves, the following equations hold.
\begin{eqnarray}
e*_{\alpha }T_{O_B}(e_1, \ldots , e_l)&=&T_{O_B}(e*_{\alpha }e_1, \ldots , e*_{\alpha }e_l),   \\
e*_{\beta }T_{O_A}(e_1, \ldots , e_l)&=&T_{O_A}(e*_{\beta }e_1, \ldots , e*_{\beta }e_l).
\end{eqnarray}
Analogeously to Proposition \ref{itLDendo} one may show the following.
\begin{prop} \label{itLDendoMutual} 
Consider $(L,O_A \cup O_B)$ such that $(L, *_A, *_B)$ is a mutual left distributive system for all $(*_A, *_B)\in O_A \times O_B$, 
and let $k \in \mathbb{N}$.
Then, for all $x=(x_1, \ldots , x_k) \in L^k$, $o_A=(*_{A_1}, \ldots ,*_{A_k}) \in O_A^k$, and $o_B=(*_{B_1}, \ldots ,*_{B_k}) \in O_B^k$, 
the iterated left multiplication maps
\begin{eqnarray*} 
\phi _{(x,o_A)}: && y \mapsto x_k*_{A_k}(x_{k-1}*_{A_{k-1}} \cdots *_{A_3}(x_2 *_{A_2}(x_1*_{A_1}y)) \cdots ) \,\, {\rm and}  \\
\phi _{(x,o_B)}: && y \mapsto x_k*_{B_k}(x_{k-1}*_{B_{k-1}} \cdots *_{B_3}(x_2 *_{B_2}(x_1*_{B_1}y)) \cdots )
\end{eqnarray*}
define a magma endomorphisms of $(L, O_B)$ and $(L, O_A)$, respectively.
\end{prop}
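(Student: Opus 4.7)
The plan is to imitate the proof of Proposition \ref{itLDendo}, but carefully tracking which operation pool each binary operation is drawn from, so that the mutual distributivity laws (\ref{abLD}) and (\ref{baLD}) can be invoked in place of a single left self-distributivity law. By symmetry it suffices to treat $\phi_{(x,o_A)}$ and show it is a homomorphism from $(L,O_B)$ to $(L,O_B)$; the corresponding statement for $\phi_{(x,o_B)}$ is obtained by interchanging the roles of $O_A$ and $O_B$ and appealing to (\ref{baLD}) instead of (\ref{abLD}).

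Fix $*_B\in O_B$ and $y_1,y_2\in L$. We must prove
\[ \phi_{(x,o_A)}(y_1*_By_2)=\phi_{(x,o_A)}(y_1)\,*_B\,\phi_{(x,o_A)}(y_2). \]
I would proceed by induction on $k$. For the base case $k=1$, the claim reduces to
\[ x_1*_{A_1}(y_1*_By_2)=(x_1*_{A_1}y_1)*_B(x_1*_{A_1}y_2), \]
which is precisely equation (\ref{abLD}), holding because $(L,*_{A_1},*_B)$ is a mutual left distributive system by hypothesis (note that $*_{A_1}\in O_A$ and $*_B\in O_B$).

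For the inductive step, write $x'=(x_1,\ldots,x_{k-1})$ and $o_A'=(*_{A_1},\ldots,*_{A_{k-1}})$, so that $\phi_{(x,o_A)}(y)=x_k*_{A_k}\phi_{(x',o_A')}(y)$ for every $y\in L$. Applying this identity to $y_1*_By_2$, then the induction hypothesis to the inner map $\phi_{(x',o_A')}$, and finally the mutual distributivity law (\ref{abLD}) for the pair $(*_{A_k},*_B)\in O_A\times O_B$, yields
\begin{eqnarray*}
\phi_{(x,o_A)}(y_1*_By_2)&=&x_k*_{A_k}\phi_{(x',o_A')}(y_1*_By_2)\\
&=&x_k*_{A_k}\bigl(\phi_{(x',o_A')}(y_1)*_B\phi_{(x',o_A')}(y_2)\bigr)\\
&=&\bigl(x_k*_{A_k}\phi_{(x',o_A')}(y_1)\bigr)*_B\bigl(x_k*_{A_k}\phi_{(x',o_A')}(y_2)\bigr)\\
&=&\phi_{(x,o_A)}(y_1)*_B\phi_{(x,o_A)}(y_2),
\end{eqnarray*}
as required. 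Since $*_B\in O_B$ was arbitrary, $\phi_{(x,o_A)}$ is a magma endomorphism of $(L,O_B)$.

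There is no real obstacle here; the argument is a direct adaptation of the single-operation case. The only point requiring attention is the bookkeeping: at every inductive step the outer operation peeled off comes from $O_A$ while the operation preserved by the homomorphism comes from $O_B$, so the correct distributivity law to cite is always (\ref{abLD}) (and dually (\ref{baLD}) in the parallel statement), never the self-distributivity of $*_{A_k}$, which is not assumed to hold in a mutual left distributive system.
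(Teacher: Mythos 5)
Your proof is correct and follows exactly the route the paper intends: the paper gives no separate argument for this proposition, stating only that it is shown ``analogously to Proposition \ref{itLDendo},'' and your induction on $k$ with the mutual distributivity law (\ref{abLD}) (resp.\ (\ref{baLD})) replacing the LD law is precisely that analogous argument. The bookkeeping point you flag --- that the peeled-off outer operation lies in $O_A$ while the preserved operation lies in $O_B$, so self-distributivity is never needed --- is the only subtlety, and you handle it correctly.
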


In particular, the following equations hold for all $k,l \in mathbb{N}$, $a, b \in L^k$, $o_A\in O_A^k$, $o_B\in O_B^k$,
$e, e_1, \ldots , e_l \in L$ and all planar rooted binary trees $T$ with $l$ leaves.
\begin{eqnarray}
\phi _{(a,o_A)} (T_{O_B}(e_1, \ldots , e_l))&=&T_{O_B}(\phi _{(a,o_A)}(e_1), \ldots , \phi _{(a,o_A)}(e_l)),   \\
\phi _{(b,o_B)} (T_{O_A}(e_1, \ldots , e_l))&=&T_{O_A}(\phi _{(b,o_B)}(e_1), \ldots , \phi _{(b,o_B)}(e_l))
\end{eqnarray}

Now, we are going to describe a KEP that applies to any system $(L,O_A\cup O_B)$ as described above.
We have two subsets of public elements $\{ s_1, \cdots , s_m \}$ and $\{t_1, \cdots , t_n \}$ of $L$.
Also, recall that $S_A=\langle s_1, \cdots , s_m \rangle _{O_A}$ and $S_B=\langle t_1, \cdots , t_n \rangle _{O_B}$.
Alice and Bob perform the following protocol steps.

\medskip

\begin{description}
\item[{\bf Protocol 2}] {\sc Key establishment for the partial multi-LD-system} \par
$(L,O_A\cup O_B)$.
\item[{\rm 1}] Alice generates her secret key $(a_0,a, o_A) \in S_A \times L^{k_A} \times O_A^{k_A}$, and Bob chooses his secret key 
   $(b, o_B)\in S_B^{k_B} \times O_B^{k_B}$. Denote $o_A=(*_{A_1}, \ldots , *_{A_{k_A}})$ and $o_B=(*_{B_1}, \ldots , *_{B_{k_B}})$, 
 then Alice's and Bob's secret magma morphisms $\alpha $ and $\beta $ are given by
\begin{eqnarray*} 
\alpha (y)&=&a_{k_A}*_{A_{k_A}}(a_{k_A-1}*_{A_{k_A-1}} \cdots *_{A_3}(a_2 *_{A_2}(a_1*_{A_1}y)) \cdots ) \quad {\rm and} \\ 
\beta (y)&=&b_{k_B}*_{B_{k_B}}(b_{k_B-1}*_{B_{k_B-1}} \cdots *_{B_3}(b_2 *_{B_2}(b_1*_{B_1}y)) \cdots ),
\end{eqnarray*}
respectively.
\item[{\rm 2}] $(\alpha (t_i))_{1\le i \le n} \in L^n, p_0= \alpha (a_0) \in L$, and sends them to Bob. 
Bob computes the vector $(\beta (s_j))_{1\le j \le m} \in L^m$, and sends it to Alice. 
\item[{\rm 3}] Alice, knowing $a_0=T_{O_A}(r_1, \ldots , r_l)$ with $r_i\in \{s_1,\ldots ,s_m\}$, computes from the received message
\[ T_{O_A}( \beta (r_1), \ldots , \beta (r_l))=\beta (T_{O_A}(r_1, \ldots , r_l))=\beta (a_0). \]
And Bob, knowing for all $1\le j \le k_B$, $b_j=T^{(j)}_{O_B}(u_{j,1}, \ldots , u_{j,l_j})$ with $u_{j,i}\in \{t_1,\ldots ,t_n\} \forall i\le l_j$ for some
$l_j \in \mathbb{N}$, computes from his received message for all $1\le j \le k_B$
\[ T^{(j)}_{O_B}(\alpha (u_{j,1}), \ldots , \alpha (u_{j,l_j}))=\alpha (T^{(j)}_{O_B}(u_{j,1}, \ldots , u_{j,l_j})=\alpha (b_j). \]
\item[{\rm 4}] Alice computes $K_A=\alpha (\beta (a_0))$.
Bob gets the shared key by 
\[ K_B:=\alpha (b_{k_B})*( \alpha (b_{k_B-1})*( \cdots (\alpha (b_2)*(\alpha (b_1)*p_0)) \cdots ))\stackrel{\alpha \, {\rm homo}}{=}K_A. \]
\end{description}

\begin{figure}[ht]
  \caption{{\sc KEP for the partial multi-LD-system} $(L,O_A\cup O_B)$.}
\begin{center} 
 \begin{tikzpicture}
 \node[red]  (Alice) at ( 0,0) [circle,draw=black!50,fill=black!20]{Alice};
 \node[blue]  (Bob) at ( 10,0) [circle,draw=black!50,fill=black!20]{Bob}
   edge [<-, bend right=10] node[auto,swap] (pA) {$\{ \phi _{{\color{red}(a,o_A)}}({\color{green}t_i}) \}_{1\le i \le n}, \,\,\, 
   \phi _{{\color{red}(a,o_A)}}({\color{red}a_0})$} (Alice)   
   edge [->, bend left=10] node[auto] (pB) { $\{ \phi _{{\color{red}(b,o_B)}}({\color{green}s_j}) \}_{1\le j \le m} $} (Alice) ;
 \node (skA) [below] at (Alice.south) [rectangle,draw=red!50,fill=red!20]{${\color{red}a_0}\in S_A, {\color{red}a} \in L^{{\color{red}k_A}}, {\color{red}o_A} \in O_A^{{\color{red}k_A}}$};
 \node (skB) [below] at (Bob.south) [rectangle,draw=red!50,fill=red!20]{${\color{red}b}\in S_B^{{\color{red}k_B}}, {\color{red}o_B} \in O_B^{{\color{red}k_B}}$};
 \begin{pgfonlayer}{background}
  \node [fill=green!20, rounded corners, fit=(Alice) (Bob) (skA) (skB) (pA) (pB)] {};
 \end{pgfonlayer} 
 \end{tikzpicture}
\end{center}
\end{figure}
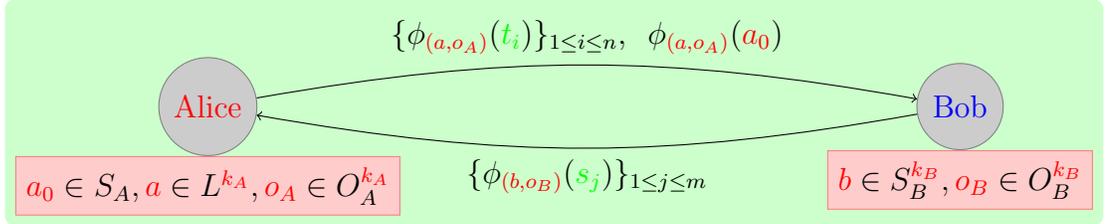

Here the operation vectors $o_A \in O_A^{k_A}$ and $o_B \in O_B^{k_B}$ are part of Alice's and Bob's private keys.
As in Protocol 1, explicit expressions of $a_0\in S_A$ and all $b_i \in S_B$ as treewords $T, T^{(i)}$ (for all $1\le i \le k_B$) 
are also parts of the private keys - though we did not mention it explicitly in step 1 of the protocols. But here $T_{O_A}$ and $T'_{O_B}$ also contain all the information about the grafting operations (in $O_A$ or $O_B$, respectively) at the internal vertices of $T$, $T^{(1)}, \ldots , T^{(k_B)}$.

\subsection{Base problems}

In order to break Protocol 2 an attacker has to find the shared key $K=K_A=K_B$.
A successful attack on Bob's secret key $(b, o_B)\in S_B^{k_B} \times O_B^{k_B}$ requires (first) the solution of the following problem.
\begin{list}{}{\setlength{\itemsep}{0cm} \setlength{\parsep}{0cm} }
\item[{\bf HomSP} (Homomorphism Search Problem for $S_A$):]
\item[{\sc Input:}]  Element pairs $(s_1,s'_1),\ldots ,(s_m,s'_m)\in L^2$ with $s'_i=\phi _{(b,o_B)}(s_i)$ $\forall 1\le i\le m$ for some (unknown) 
$k_B\in \mathbb{N}$, $b \in L^{k_B}$, $o_B\in O_B^{k_B}$.  
\item[{\sc Objective:}] Find $k'_B\in \mathbb{N}$, $b'\in L^{k'_B}$ and $o'_B\in O_B^{k'_B}$, defining a magma homomorphism
$\phi _{(b', o'_B)}: S_A \longrightarrow (L,O_A)$, such that $\phi _{(b', o'_B)}(s_i)=s'_i$ for all $i=1,\ldots ,m$.
\end{list}

Recall that we work in the left distributive system $(L, O_A \cup O_B)$.
Denote 
\begin{eqnarray*}
 Hom(S_A)&=&Hom(S_A,(L,O_A))  \\
 &=&\{ \phi : L \longrightarrow L \mid \phi (y_1*_Ay_2)=\phi (y_1) *_A \phi (y_2) \,\, \forall y_1,y_2 \in S_A \, \forall *_A\in O_A  \}. 
\end{eqnarray*}
We define the {\it generalized HomSP} for $(S_A, S_B)$ as a Homomorphism Search Problem for $S_A$ 
with the objective to find a magma homomorphism $\phi _{(b', o'_B)}\in Hom(S_A)$ with $o'_B\in O_B^{k'_B}$ (for some $k'_B\in \mathbb{N}$) {\it and} 
$b'\in S_B^{k'_B}$ (with $S_B=\langle t_1, \ldots , t_n \rangle _{O_B}$).  

\medskip

Even if an attacker finds a pseudo-key homomorphism $\phi_{(b', o'_B} \in Hom(S_A)$, then she still faces the following problem.
\begin{list}{}{\setlength{\itemsep}{0cm} \setlength{\parsep}{0cm} }
\item[{\bf $O_B$-MSP} ($O_B$-submagma Membership Search Problem for $S_B$):]
\item[{\sc Input:}] $t_1,\ldots ,t_n\in L$, $b\in S_B=\langle t_1, \ldots , t_n \rangle _{O_B}$.
\item[{\sc Objective:}] Find an expression of $b$ as a tree-word (with internal vertices labelled by operations in $O_B$) in the submagma $S_B$ (notation \\ $b=T_{O_B}(u_1,\ldots ,u_k)$ for $u_i \in \{ t_j \} _{j\le n}$).  
\end{list}

\begin{prop}  \label{BaseProbs1KEP2}
An oracle that solves the generalized HomSP for $(S_A, S_B)$ and $O_B$-MSP for $S_B$ is sufficient to break key establishment Protocol 2.
\end{prop}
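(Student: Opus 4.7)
The plan is to mirror the attack used in the proof of Proposition \ref{BaseProbs1KEP1}, targeting Bob's private key with the two oracles in sequence. First we feed the intercepted pairs $\{(s_i, \beta (s_i))\}_{i\le m}$ to the generalized HomSP oracle for $(S_A, S_B)$, which returns a pseudo-key $(b', o'_B)\in S_B^{k'_B} \times O_B^{k'_B}$ (for some $k'_B \in \mathbb{N}$) such that $\phi _{(b', o'_B)}(s_i) = \beta (s_i)$ for all $i \le m$.

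By Proposition \ref{itLDendoMutual}, since $o'_B \in O_B^{k'_B}$ and $o_B \in O_B^{k_B}$, both $\phi _{(b', o'_B)}$ and $\beta $ are magma endomorphisms of $(L, O_A)$; each therefore restricts to an $O_A$-magma homomorphism on the $O_A$-submagma $S_A$. Since these two homomorphisms coincide on the generators $s_1, \ldots , s_m$ of $S_A$, they coincide on all of $S_A$; in particular $\phi _{(b', o'_B)}(a_0) = \beta (a_0)$.

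Next, for each $1\le i \le k'_B$ we feed the component $b'_i\in S_B$ into the $O_B$-MSP oracle for $S_B$, which returns a tree-word expression $b'_i = T^{(i)}_{O_B}(u_{i,1}, \ldots , u_{i,l_i})$ with $u_{i,j}\in \{t_k\}_{k\le n}$. Since $\alpha = \phi _{(a,o_A)}$ is, by Proposition \ref{itLDendoMutual}, a magma endomorphism of $(L, O_B)$, we obtain
\[ \alpha (b'_i)=T^{(i)}_{O_B}(\alpha (u_{i,1}), \ldots , \alpha (u_{i,l_i})), \]
which is computable solely from the publicly transmitted values $(\alpha (t_k))_{k\le n}$.

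Finally, writing $o'_B=(*_{B'_1},\ldots ,*_{B'_{k'_B}})$, we compute
\begin{eqnarray*}
K'_B &:=& \alpha (b'_{k'_B})*_{B'_{k'_B}}(\alpha (b'_{k'_B-1})*_{B'_{k'_B-1}} \cdots *_{B'_2}(\alpha (b'_1)*_{B'_1}p_0)\cdots ) \\
&\stackrel{\alpha \,\, {\rm hom}}{=}& \alpha (b'_{k'_B}*_{B'_{k'_B}}(b'_{k'_B-1}*_{B'_{k'_B-1}} \cdots *_{B'_2}(b'_1*_{B'_1}a_0)\cdots )) \\
&=& \alpha (\phi _{(b', o'_B)}(a_0))=\alpha (\beta (a_0))=K_A,
\end{eqnarray*}
which recovers the shared key. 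The main delicate point is the homomorphism-extension step in the second paragraph: agreement on the generators $s_i$ extends to all of $S_A$ precisely because $\phi _{(b', o'_B)}$ is structurally an $O_A$-homomorphism (guaranteed by its iterated form together with the mutual distributivity of the operations in $O_B$ over those in $O_A$), not merely by the pointwise equations $\phi _{(b', o'_B)}(s_i) = s'_i$ in isolation; the generalized HomSP oracle delivers exactly a pseudo-key with this structural guarantee.
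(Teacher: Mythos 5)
Your proof is correct and follows essentially the same route as the paper's: query the generalized HomSP oracle for a pseudo-key $\phi_{(b',o'_B)}$ agreeing with $\beta$ on $S_A$, use the $O_B$-MSP oracle to express each $b'_i$ as a tree-word in the $t_j$ so that $\alpha(b'_i)$ is computable from the public data, and assemble $K'_B=\alpha(\phi_{(b',o'_B)}(a_0))=K_A$. Your second paragraph merely makes explicit the homomorphism-extension argument that the paper compresses into ``Observe that this implies\dots''.
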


\begin{proof}
As outlined above, we perform an attack on Bob's private key. The generalized HomSP-oracle for $(S_A, S_B)$ provides a $k'_B\in \mathbb{N}$ and a pseudo-key homomorphism $\phi _{(b', o'_B)}\in Hom(S_A)$ with  $b'\in S_B^{k'_B}$, $o'_B\in O_B^{k'_B}$ such that  
$\phi _{(b', o'_B)}(s_i)=s'_i=\beta (s_i)$ for all $i=1,\ldots ,m$. 
Observe that this implies for any element $e_A\in S_A$ that $\phi _{(b', o'_B)}(e_A)=\beta (e_A)$.
In particular, we have $\phi _{(b', o'_B)}(a_0)=\beta (a_0)$.
Since $b'\in S_B^{k'_B}$, we may feed, for each $1\le i \le k'_B$, $b'_i$ into a $O_B$-MSP oracle for $S_B$ which returns a tree-word  
$T^{(i)'}_{O_B}(u_{i,1},\ldots ,u_{i,l_i})=b'_i$ (for some $l_i\in \mathbb{N}$ and $u_{i,j} \in \{ t_k \} _{k\le n}$). 
Now, we compute for each $1\le i \le k'_B$,
\[ T^{(i)'}_{O_B}(\alpha (u_1),\ldots ,\alpha (u_l)) \stackrel{\alpha \, {\rm homo}}{=}  \alpha (T^{(i)'}_{O_B}(u_1,\ldots ,u_l))= \alpha (b'_i). \]
Let $o'_B=(*_{B_1},\ldots ,*_{B_{k'_B}})$.
This enables us to compute
\begin{eqnarray*}
K'_B &=& \alpha (b'_{k'_B})*_{B_{k'_B}}(\alpha (b'_{k'_B-1})*_{B_{k'_B -1}} (\cdots *_{B_3}(\alpha (b'_2) *_{B_2}(\alpha (b'_1)*_{B_1}p_0)) \cdots )) \\
&\stackrel{\alpha \,\, {\rm hom}}{=}& \alpha (b'_{k'_B}*_{B_{k'_B}}(b'_{k'_B-1}*_{B_{k'_B -1}} (\cdots *_{B_3}(b'_2 *_{B_2}(b'_1*_{B_1}a_0)) \cdots )) \\
&=&\alpha (\phi _{(b',o'_B)}(a_0))=\alpha (\beta (a_0))=K_A.
\end{eqnarray*}
\end{proof}

On the other hand, an attack on Alice's secret key requires (first) the solution of the following problem.
\begin{list}{}{\setlength{\itemsep}{0cm} \setlength{\parsep}{0cm} }
\item[{\bf modHomSP} (Modified Homomorphism Search Problem for $S_B$):]
\item[{\sc Input:}]  Element pairs $(t_1,t'_1),\ldots ,(t_n,t'_n)\in L^2$ with $t'_i=\phi _{a, \alpha }(t_i)$ $\forall 1\le i\le n$ for some (unknown) 
magma homomorphism $\phi _{(a, o_A)}\in End(S_B)$ (with $o_A\in O_A^{k_A}$). Furthermore, an element $p_0 \in \phi _{(a,o_A)}(S_A)$, i.e.
$p_0=\phi _{(a, o_A)}(a_0)$ for some $a_0\in S_A$.
\item[{\sc Objective:}] Find $k'_A\in \mathbb{N}$, $(a'_0, \phi _{(a', o'_A)})\in L \times End(S_B)$ ($o'_A\in O_A^{k'_A}$) such that 
$\phi _{(a', o'_A)}(t_i)=t'_i$ for all $i=1,\ldots ,n$ {\it and} $\phi _{(a', o'_A)}(a'_0)=p_0$.
\end{list}

We define the {\it generalized modHomSP} for $(S_B, S_A)$ as a modified Homomorphism Search Problem for $S_B$ 
with the objective to find $(a'_0, \phi _{(a', o'_A})\in S_A \times End(S_B)$ ($o'_A\in O_A^{k'_A}$) such that $\phi _{(a', o'_A)}(t_i)=t'_i$ for all $i=1,\ldots ,n$ {\it and} $\phi _{(a', o'_A)}(a'_0)=p_0$.  
\par Even if an attacker finds a pseudo-key $(a'_0, \phi _{(a', o'_A)}) \in S_A \times End(S_B)$ for Alice's secret, 
then she still faces an $O_A$-submagma Membership Search Problem for $S_A$.

\begin{prop}  \label{BaseProbs2KEP2}
An oracle that solves the generalized modHomSP for 
$(S_B, S_A)$ and $O_A$-MSP for $S_A$ is sufficient to break key establishment Protocol 2.
\end{prop}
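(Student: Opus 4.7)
The plan is to adapt the attack on Alice's key used in the proof of Proposition \ref{BaseProbs2KEP1} to the mutual distributive setting, substituting the LD-endomorphism property by the $O_B$-endomorphism property of iterated $O_A$-multiplication maps guaranteed by Proposition \ref{itLDendoMutual}.

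First, I will query the generalized modHomSP oracle for $(S_B, S_A)$ with the publicly transmitted data $(t_i, \alpha(t_i))_{1 \le i \le n}$ and $p_0 = \alpha(a_0)$. The oracle returns a pseudo-key $(a'_0, \phi_{(a', o'_A)}) \in S_A \times \mathrm{End}(S_B)$, with $a' \in L^{k'_A}$ and $o'_A \in O_A^{k'_A}$, satisfying $\phi_{(a', o'_A)}(t_i) = \alpha(t_i)$ for all $i \le n$ and $\phi_{(a', o'_A)}(a'_0) = p_0$. By Proposition \ref{itLDendoMutual}, both $\phi_{(a', o'_A)}$ and $\alpha = \phi_{(a, o_A)}$ are magma endomorphisms of $(L, O_B)$, so since they agree on the $O_B$-generators $t_1, \ldots, t_n$ of $S_B$ they coincide on all of $S_B$; in particular $\phi_{(a', o'_A)}(b_j) = \alpha(b_j)$ for every $1 \le j \le k_B$.

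Next, I will feed $a'_0 \in S_A$ to the $O_A$-MSP oracle for $S_A$, obtaining an explicit treeword $a'_0 = T'_{O_A}(r_1, \ldots, r_l)$ with $r_i \in \{s_1, \ldots, s_m\}$. Since $\beta = \phi_{(b, o_B)}$ is an $(L, O_A)$-endomorphism (Proposition \ref{itLDendoMutual}) and Bob's vector $(\beta(s_j))_{j \le m}$ is public, I can compute
\[ T'_{O_A}(\beta(r_1), \ldots, \beta(r_l)) = \beta(T'_{O_A}(r_1, \ldots, r_l)) = \beta(a'_0). \]
I then output $K'_A := \phi_{(a', o'_A)}(\beta(a'_0))$. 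Correctness follows by expanding $\beta(a'_0)$ via the definition of $\beta$ and then pushing $\phi_{(a', o'_A)}$ inside each $O_B$-operation (valid because it is an $(L, O_B)$-endomorphism):
\begin{eqnarray*}
K'_A &=& \phi_{(a', o'_A)}(b_{k_B} *_{B_{k_B}} (\cdots *_{B_2} (b_1 *_{B_1} a'_0) \cdots)) \\
     &=& \phi_{(a', o'_A)}(b_{k_B}) *_{B_{k_B}} (\cdots *_{B_2} (\phi_{(a', o'_A)}(b_1) *_{B_1} \phi_{(a', o'_A)}(a'_0)) \cdots) \\
     &=& \alpha(b_{k_B}) *_{B_{k_B}} (\cdots *_{B_2} (\alpha(b_1) *_{B_1} p_0) \cdots) = K_B = K_A,
\end{eqnarray*}
the last line using $\phi_{(a', o'_A)}(b_j) = \alpha(b_j)$ and $\phi_{(a', o'_A)}(a'_0) = p_0$.

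The main conceptual point, rather than any serious obstacle, is to ensure that the pseudo-homomorphism returned by the modHomSP oracle is an iterated $O_A$-multiplication map of the prescribed form $\phi_{(a', o'_A)}$ with $o'_A \in O_A^{k'_A}$, not merely an abstract extension of a map on $S_B$, so that the distributivity step in the display above is justified; this is built into the definition of the oracle's output and Proposition \ref{itLDendoMutual} then applies directly. Everything else is a routine verification mirroring the proof of Proposition \ref{BaseProbs2KEP1}.
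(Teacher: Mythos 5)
Your proof is correct and follows essentially the same route as the paper's: query the generalized modHomSP oracle to get a pseudo-key $(a'_0,\phi_{(a',o'_A)})$ agreeing with $\alpha$ on $S_B$, use the $O_A$-MSP oracle to express $a'_0$ as a treeword and recover $\beta(a'_0)$ from Bob's public vector, then push $\phi_{(a',o'_A)}$ through the $O_B$-operations to land on $K_B$. The only difference is presentational — you make explicit the appeal to Proposition \ref{itLDendoMutual} and the agreement-on-generators argument, which the paper leaves as an ``observe that'' — so no further comment is needed.
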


\begin{proof}
As outlined above, we perform an attack on Alice's private key. The generalized modHomSP oracle provides a pseudo-key 
$(a'_0, \phi _{(a', o'_A)}) \in S_A \times Hom(S_B)$
such that $\phi _{(a', o'_A)}(t_i)=t'_i=\alpha (t_i)$ for all $i=1,\ldots ,n$ {\it and} 
$\phi _{(a', o'_A)}(a'_0)=p_0$. 
Observe that this implies for any element $e_B\in S_B$ that 
 $\phi _{(a', o'_A)}(e_B)=\alpha (e_B)$.
In particular, we have $\phi _{(a', o'_A)}(b_i)=\alpha (b_i)$ for all $1\le i \le k_B$.
Since $a'_0 \in S_A$, we may feed $a'_0$ into a $O_A$-MSP oracle for $S_A$ which returns a tree-word  
$T'_{O_A}(r_1,\ldots ,r_l)=a'_0$ (for some $l\in \mathbb{N}$ and $r_i \in \{ s_j \} _{j\le m}$). Now, we may compute 
\begin{eqnarray*}
K'_A &=& \phi _{(a', o'_A)}(T'_{O_A}(\beta (r_1),\ldots ,\beta (r_l))  \stackrel{\beta \, {\rm homo}}{=}  
\phi _{(a', o'_A)}(\beta (T'_{O_A}(r_1,\ldots ,r_l)))  \\
&=& \phi _{(a', o'_A)}(\beta (a'_0)) = \phi _{(a',o'_A)}(b_{k_B}*_{B_{k_B}}(\cdots *_{B_3}(b_2*_{B_2}(b_1*_{B_1}a'_0))\cdots )) \\
&\stackrel{{\rm homo}}{=}& \phi _{(a',o'_A)}(b_{k_B})*_{B_{k_B}}(\cdots *_{B_3}(\phi _{(a',o'_A)}(b_2)*_{B_2}(\phi _{(a',o'_A)}(b_1)*_{B_1}\phi _{(a',o'_A)}(a'_0)))\cdots ) \\
&=&   \alpha (b_{k_B})*_{B_{k_B}}(\cdots *_{B_3}(\alpha (b_2)*_{B_2}(\alpha (b_1)*_{B_1}p_0))\cdots )=K_B.
\end{eqnarray*}
\end{proof}

Now, we describe approaches to break Protocol 2 which do not resort to solving a submagma-MSP. 

\begin{prop} \label{BaseProbs3KEP2} 
A generalized HomSP oracle is sufficient to break key establishment Protocol 2.
More precisely, an oracle that solves the generalized HomSP for $(S_A, S_B)$ {\it and} the HomP for $S_B$ is sufficient to break KEP 2.
\end{prop}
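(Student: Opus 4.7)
The plan is to mirror the argument of Proposition \ref{BaseProbs3KEP1}, attacking both private keys simultaneously but producing only a pseudo-homomorphism on Alice's side, with no preimage of $p_0$ required. First I would query the HomP oracle for $S_B$ with the pairs $(t_i,\alpha(t_i))_{i\le n}$ (the values $\alpha(t_i)$ are public) to obtain $k'_A\in\mathbb{N}$, $a'\in L^{k'_A}$ and $o'_A\in O_A^{k'_A}$ such that $\phi_{(a',o'_A)}$ matches $\alpha$ on every $t_i$. Next I would query the generalized HomSP oracle for $(S_A,S_B)$ with the pairs $(s_j,\beta(s_j))_{j\le m}$ to obtain $k'_B\in\mathbb{N}$, $b'\in S_B^{k'_B}$ and $o'_B\in O_B^{k'_B}$ such that $\phi_{(b',o'_B)}$ matches $\beta$ on every $s_j$. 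The crucial extra guarantee of the generalized oracle is precisely $b'\in S_B^{k'_B}$.

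The next step is to promote these pointwise matchings on generators to full submagma agreements, using Proposition \ref{itLDendoMutual}. The map $\phi_{(a',o'_A)}$ is an $O_B$-magma endomorphism of $L$, so, agreeing with $\alpha$ on the $O_B$-generators $t_1,\ldots,t_n$ of $S_B$, it agrees with $\alpha$ on all of $S_B$; in particular $\phi_{(a',o'_A)}(b'_i)=\alpha(b'_i)$ for every $i\le k'_B$. Symmetrically, $\phi_{(b',o'_B)}$ is an $O_A$-magma endomorphism of $L$ agreeing with $\beta$ on $S_A$, whence $\phi_{(b',o'_B)}(a_0)=\beta(a_0)$.

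I would then synthesise a candidate shared key by imitating Bob's step 4, substituting $\phi_{(a',o'_A)}$ for Alice's unknown $\alpha$. With $o'_B=(*_{B_1},\ldots,*_{B_{k'_B}})$ and $p_0=\alpha(a_0)$, set
\[ K'_B := \phi_{(a',o'_A)}(b'_{k'_B})*_{B_{k'_B}}\bigl(\cdots *_{B_2}(\phi_{(a',o'_A)}(b'_1)*_{B_1}p_0)\cdots\bigr). \]
Using $\phi_{(a',o'_A)}(b'_i)=\alpha(b'_i)$ and then the $O_B$-homomorphism property of $\alpha$ to pull $\alpha$ outside the nested product gives
\[ K'_B = \alpha\bigl(b'_{k'_B}*_{B_{k'_B}}(\cdots *_{B_2}(b'_1*_{B_1}a_0)\cdots)\bigr) = \alpha(\phi_{(b',o'_B)}(a_0)) = \alpha(\beta(a_0)) = K_A. \]

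The only real subtlety is bookkeeping: one must track which operation family each pseudo-map homomorphically respects, which is exactly the content of Proposition \ref{itLDendoMutual} under the mutual distributivity hypothesis of the protocol. That is why the \emph{generalized} HomSP (with $b'\in S_B^{k'_B}$) is needed on Bob's side, so that $\phi_{(a',o'_A)}$ may legally be applied componentwise to $b'$, while on Alice's side the plain HomP suffices because $a_0\in S_A$ is handled by $\phi_{(b',o'_B)}$ as a black-box argument. Notably, no submagma-MSP enters, since the oracle delivers $b'$ already decomposed and $a_0$ is never decomposed at all.
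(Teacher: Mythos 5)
Your proposal is correct and follows essentially the same route as the paper's proof: invoke the plain HomSP/HomP oracle on Alice's side and the generalized HomSP oracle on Bob's side, use $b'\in S_B^{k'_B}$ together with agreement on the generators $t_i$ to get $\phi_{(a',o'_A)}(b'_i)=\alpha(b'_i)$, use $a_0\in S_A$ to get $\phi_{(b',o'_B)}(a_0)=\beta(a_0)$, and then recover $K_A$ by the identical nested-product computation. The only difference is that you make explicit the generators-to-submagma argument via Proposition \ref{itLDendoMutual}, which the paper leaves implicit.
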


\begin{proof}
Here we perform attacks on Alice's and Bob's private keys - though we do not require a pseudo-key for the first component $a_0$ of Alice's key. 
The HomSP oracle for $S_B$ provides $\phi _{(a', o'_A)}$ with $a'\in L^{k'_A}$ and $o'_A\in O_A^{k'_A}$ s.t. $\phi _{(a', o'_A)}(t_j)=t'_j=\alpha (t_j)$ for all $j\le n$. 
And the generalized HomSP oracle for $(S_A,S_B)$ returns the pseudo-key endomorphism $\phi _{(b', o'_B)}$ with $b'\in S_B^{k'_B}$ and $o'_B\in O_B^{k'_B}$
s.t. $\phi _{(b', o'_B)}(s_i)=s'_i=\beta (s_i)$ for all $i\le m$. Since $b'\in S_B^{k'_B}$, 
we conclude that $\phi _{(a', o'_A)}(b'_i)=\alpha (b'_i)$ for all $1\le i \le k'_B$.
Also, $a_0\in S_A$ implies, of course, $\phi _{(b', o'_B)}(a_0)= \beta (a_0)$.
Let $o'_B=(*_{B'_1}, \ldots , *_{B'_{k'_B}})$
Now, we compute 
\begin{eqnarray*} 
K'_B &=& \phi _{(a',o'_A)}(b'_{k'_B})*_{B'_{k'_B}}(\cdots *_{B'_3}(\phi _{(a',o'_A)}(b'_2)*_{B'_2}(\phi _{(a',o'_A)}(b'_1)*_{B'_1} p_0))\cdots )  \\
&=& \alpha (b'_{k'_B})*_{B'_{k'_B}}(\cdots *_{B'_3}(\alpha (b'_2)*_{B'_2}(\alpha (b'_1)*_{B'_1} \alpha (a_0)))\cdots ) \\
&\stackrel{{\rm homo}}{=}& \alpha (b'_{k'_B}*_{B'_{k'_B}}(\cdots *_{B'_3}(b'_2*_{B'_2}(b'_1*_{B'_1} a_0))\cdots ))  \\
&=& \alpha (\phi _{(b',o'_B)}(a_0))=\alpha (\beta (a_0))=K_A.
\end{eqnarray*}
\end{proof}

Alternatively, one may choose the following approach.
\begin{prop}  \label{BaseProbs4KEP2} 
An oracle that solves the generalized modHomSP for
$(S_B,S_A)$ {\it and} the HomSP for $S_A$ is sufficient to break KEP 2.
\end{prop}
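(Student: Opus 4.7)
The plan is to mirror the structure of Proposition \ref{BaseProbs4KEP1}, performing a simultaneous attack on both parties' keys, but where only a pseudo-key for the first component $a_0$ of Alice's secret is required to lie in the submagma $S_A$. First I would feed the public pairs $(s_j,s'_j)_{1\le j\le m}$ into the HomSP oracle for $S_A$ to obtain $\phi_{(b',o'_B)}$ with $b'\in L^{k'_B}$ and $o'_B\in O_B^{k'_B}$ satisfying $\phi_{(b',o'_B)}(s_j)=s'_j=\beta(s_j)$ for every $j\le m$. Simultaneously, I would feed $(t_i,t'_i)_{1\le i\le n}$ together with $p_0$ into the generalized modHomSP oracle for $(S_B,S_A)$ to obtain a pseudo-key $(a'_0,\phi_{(a',o'_A)})\in S_A\times Hom(S_B)$ (with $o'_A\in O_A^{k'_A}$) satisfying $\phi_{(a',o'_A)}(t_i)=t'_i=\alpha(t_i)$ for all $i\le n$ and $\phi_{(a',o'_A)}(a'_0)=p_0$.

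The next ingredient is that each oracle-returned map agrees with the corresponding true map on the opposite party's generating set, and hence propagates equality throughout the generated submagma. Specifically, since $b\in S_B^{k_B}$ and $\phi_{(a',o'_A)}$ coincides with $\alpha$ on $\{t_i\}_{i\le n}$, Proposition \ref{itLDendoMutual} together with the tree-word homomorphism property yields $\phi_{(a',o'_A)}(b_j)=\alpha(b_j)$ for all $j\le k_B$. Symmetrically, since $a'_0\in S_A$ and $\phi_{(b',o'_B)}$ coincides with $\beta$ on $\{s_j\}_{j\le m}$, we get $\phi_{(b',o'_B)}(a'_0)=\beta(a'_0)$.

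Finally I would compute the shared key directly, starting from $K'_A=\phi_{(a',o'_A)}(\phi_{(b',o'_B)}(a'_0))$. The crucial manipulation is to replace $\phi_{(b',o'_B)}(a'_0)$ by $\beta(a'_0)$, expand $\beta(a'_0)$ as the iterated $O_B$-product of the $b_j$'s applied to $a'_0$, and then distribute $\phi_{(a',o'_A)}$ through this product using its homomorphism property on $(L,O_B)$, substituting $\phi_{(a',o'_A)}(b_j)=\alpha(b_j)$ and $\phi_{(a',o'_A)}(a'_0)=p_0$ along the way. This yields
\[ K'_A=\alpha(b_{k_B})*_{B_{k_B}}(\cdots *_{B_2}(\alpha(b_1)*_{B_1}p_0)\cdots)=K_B. \]
The main subtlety — and really the only thing to check carefully — is that the iterated map $\phi_{(a',o'_A)}$, though reconstructed by the oracle, genuinely distributes across every $*_{B_i}\in O_B$ appearing in the expansion of $\beta$. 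This is exactly guaranteed by Proposition \ref{itLDendoMutual} and the assumed mutual distributivity between $O_A$ and $O_B$, so no additional submagma-MSP invocation is needed: we never have to decompose $b$ as a tree-word, only to exploit that its entries lie in $S_B$.
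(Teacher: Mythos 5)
Your proposal is correct and follows essentially the same route as the paper's own proof: invoke the two oracles, use $a'_0\in S_A$ to get $\phi_{(b',o'_B)}(a'_0)=\beta(a'_0)$ and $b\in S_B^{k_B}$ to get $\phi_{(a',o'_A)}(b_j)=\alpha(b_j)$, then expand $\beta(a'_0)$ and distribute $\phi_{(a',o'_A)}$ through the iterated $O_B$-product to land on $K_B$. Your observation that the plain HomSP only yields $b'\in L^{k'_B}$ and that this suffices (since the final computation uses the true $b$, never a tree-word decomposition of $b'$) is accurate and consistent with the paper's argument.
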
 

\begin{proof}
Also here we perform attacks on Alice's and Bob's private keys.  
The HomSP oracle for $S_A$ provides $\phi _{(b',o'_B)}\in End(S_A)$ (with $b'\in S_B^{k'_B}$ and $o'_B=(*_{B'_1}, \ldots ,*_{B'_{k'_B}})\in O_B^{k'_B}$) s.t. $\phi _{(b',o'_B)}(s_j)=s'_j=\beta (s_j)$ for all $j\le m$. 
And the generalized modHomSP oracle for $(S_B,S_A)$ returns the pseudo-key $(a'_0, \phi _{(a',o'_A)}) \in S_A \times End(S_B)$ (with $a'\in S_A^{k'_A}$ and $o'_A\in O_A^{k'_A}$)
s.t. $\phi _{(a',o'_A)}(t_i)=t'_i=\alpha (t_i)$ for all $i\le n$ {\it and} $\phi _{(a',o'_A)}(a'_0)=p_0$. Since $a'_0\in S_A$, we conclude that 
$\phi _{(b',o'_B)}(a'_0)=\beta (a'_0)$.
Also, $b\in S_B^{k_B}$ implies, of course, $\phi _{(a',o'_A)}(b_i)=\alpha (b_i)$ for all $1\le i \le k_B$.
Now, we compute 
\begin{eqnarray*}
K'_A &=& \phi _{(a',o'_A)}( \phi _{(b',o'_B)}( a'_0)) =\phi _{(a',o'_A)}(\beta (a'_0)) \\
&=& \phi _{(a',o'_A)}(b_{k_B}*_{B_{k_B}}(\cdots *_{B_3}(b_2*_{B_2}(b_1*_{B_1} a'_0))\cdots ))  \\
&\stackrel{{\rm homo}}{=}&  
\phi _{(a',o'_A)}(b_{k_B})*_{B_{k_B}}(\cdots *_{B_3}(\phi _{(a',o'_A)}(b_2)*_{B_2}(\phi _{(a',o'_A)}(b_1)*_{B_1} \phi _{(a',o'_A)}(a'_0)))\cdots )  \\
&=& \alpha (b_{k_B})*_{B_{k_B}}(\cdots *_{B_3}(\alpha (b_2)*_{B_2}(\alpha (b_1)*_{B_1} p_0))\cdots )=K_B.
\end{eqnarray*}
\end{proof}

\section{Instantiations using shifted conjugacy}

\subsection{Protocol 1}
Consider the infinite braid group $(B_{\infty },*)$ with shifted conjugacy as LD-operation. Then the iterated LD-Problem is a
\emph{simultaneous iterated shifted conjugacy problem}.
For $m=n=1$ this becomes an \emph{iterated shifted conjugacy problem}.
The \emph{shifted conjugacy problem}. (see e.g. \cite{De06}) which was first solved in \cite{KLT09}
by a double reduction, first to the subgroup conjugacy problem for $B_{n-1}$ in $B_n$, then to an instance of the simultaneous conjugacy problem. 
For the simultaneous conjugacy problem in braid groups we refer to \cite{LL02,KT13}.
As the shifted CP, also the iterated shifted CP can be reduced to a  subgroup conjugacy problem for $B_{n-1}$ in $B_n$.
Even if we replace shifted conjugacy by generalized shifted conjugacy, then the corresponding iterated LD-problem still reduces to a subgroup conjugacy problem for
a standard parabolic subgroup of a braid group.
Such problems were first solved in a more general framework, namely for Garside subgroups of Garside groups, in \cite{KLT10}. 
Though not explicitly stated in \cite{KLT09, KLT10}, the simultaneous shifted conjugacy problem and its analogue for generalized shifted conjugacy
may be treated by similar methods as in \cite{KLT09, KLT10}.
Though these solutions provide only deterministic algorithms with exponential worst case complexity,
they may still affect the security of Protocol 1 if we use such LD-systems in braid groups as platform LD-systems. 
Moreover, efficient heuristic approaches to the shifted conjugacy problem were developed in \cite{LU08,LU09}.
Therefore, one may doubt whether an instantiation of Protocol 1 using (iterated) shifted conjugacy in braid groups provides a secure KEP.
Nevertheless, it is still more interesting than the classical AAG-KEP for braid groups, and it 
might be considered as a first challenge for an possible attacker.

\subsection{Protocol 2} \label{P2}
Here we propose a natural instantiation of Protocol 2 using generalized shifted conjugacy in braid groups.
Consider the following natural partial multi-LD-system $(B_{\infty }, O_A \cup O_B)$ in braid groups. 
\par 
Let $1<q_1<q_2<p$ such that $q_1, p-q_2 \ge 3$. 
Let any $*_{\alpha }\in O_A$ be of the form
$x*_{\alpha }y= \partial ^p(x^{-1}) \alpha \partial ^p(y)x$ with $\alpha =\alpha _1 \tau _{p,p}^{\pm 1} \alpha _2$ for some $\alpha _1 \in B_{q_1}$,
$\alpha _2 \in B_{q_2}$.
Analogously, any $*_{\beta }\in O_B$ is of the form $x*_{\beta }y= \partial ^p(x^{-1}) \beta \partial ^p(y)x$ with 
$\beta = \beta _1 \tau _{p,p}^{\pm 1} \beta _2$ for some $\beta _1  \partial ^{q_2} \in (B_{p-q_2})$, $\beta _2 \in \partial ^{q_1}(B_{p-q_1})$.
Since $[\alpha _1, \beta _1]=[\alpha _1, \beta _2]=[\beta _1, \alpha _2]=1$, 
$(B_{\infty }, *_{\alpha } , *_{\beta })$ is a mutual left distributive system according to Proposition \ref{abc} (d). 
Note that, if in addition we have $[\alpha _1, \alpha _2]=[\beta _1, \beta _2]=1$, then 
$(B_{\infty }, *_{\alpha } , *_{\beta })$ is a bi-LD-system according to Proposition \ref{abc} (c).
But in general these additional commutativity relations do not hold for our choice of standard parabolic subgroups as domains for 
$\alpha _1, \alpha _2, \beta _1, \beta _2$. Note that, if we restrict $\alpha _2, \beta _2$ to $\partial ^{q_1}(B_{q_2-q_1})$, 
then these additional relations are enforced.
Anyway, they are not necessary for $(B_{\infty }, *_{\alpha } , *_{\beta })$ being a mutual left distributive system. 
In either case, $\alpha _2$ does not need to commute with $\beta _2$.  
\par  
Alice and Bob perform the protocol steps of Protocol 2 for the partial multi-LD-system $(B_{\infty }, O_A \cup O_B)$ 
as described in section \ref{KEPpartial}. 
\par 
The deterministic algorithms from \cite{KLT09, KLT10} do not affect the security of this instantiation of Protocol 2, 
because the \emph{operations are part of the secret}.
\par
We provide an explicit formula for the public information (here $s'_i$) depending on $s_i$ and Bob's secret keys, namely $k=k_B\in \mathbb{N}$ and $(b,o_{\beta })\in S_B^k\times O_B^k$ where $b=(b_1, \ldots , b_k)$ and $o_{\beta }=(*_{\beta _1}, \ldots ,*_{\beta _k})$ and 
$x*_{\beta _i}=\beta '_i \tau _{p,p}^{\epsilon _i} \beta ''_i$ with $\beta '  \in \partial ^{q_2}(B_{p-q_2}) \subseteq B_p$,
$\beta ' \in \partial ^{q_1}(B_{p-q_1}) \subseteq B_p$ and $\epsilon _i \in \{\pm 1 \}$.  
Let $\tilde{b}=\partial ^{(k-1)p}(b_1) \cdots \partial ^p(b_{k-1})b_k$, $\tilde{\beta }'=\beta '_k \beta '_{k-1} \cdots \beta '_1$
and $\tilde{\beta }''=\prod _{i=1}^{k} \partial ^{(i-1)p}(\beta ''_i)$. Then we have
\begin{equation}
s'_i=\partial ^p(\tilde{b}^{-1}) \tilde{\beta }' \partial ^p(\tilde{\beta }'') 
\tau _{p,p}^{\epsilon _k} \partial ^p(\tau _{p,p}^{\epsilon _{k-1}}) \cdots \partial ^{(k-1)p}(\tau _{p,p}^{\epsilon _1}) 
\partial ^{kp}(s_i) \tilde{b}.  
\end{equation}

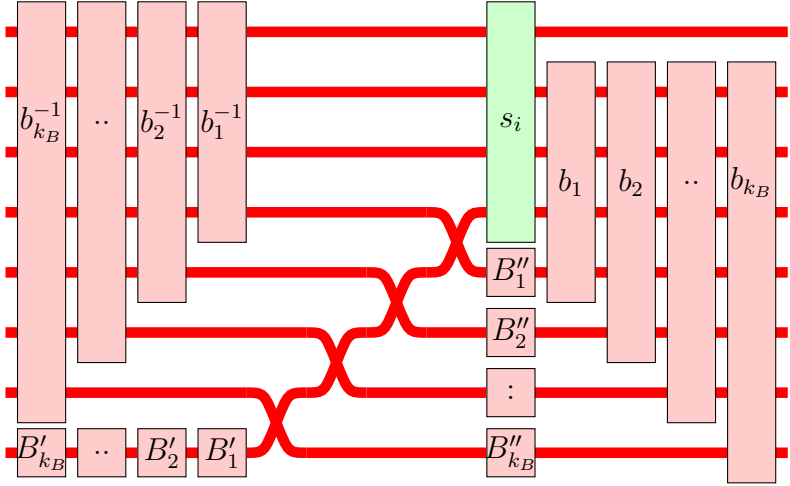
\begin{figure}[ht]
  \caption{Structure of Bob's public key $s'_i$.}
\begin{center}
\begin{tikzpicture}  [scale = 0.8]
% first draw strands
\draw [line width = 4pt, red] (0,0) -- ++(4,0);
\draw [line width = 4pt, red] (0,1) -- ++(4,0);
\draw [line width = 4pt, red] (0,2) -- ++(5,0);
\draw [line width = 4pt, red] (0,3) -- ++(6,0);
\draw [line width = 4pt, red] (0,4) -- ++(7,0);
\draw [line width = 4pt, red] (0,5) -- ++(13,0);
\draw [line width = 4pt, red] (0,6) -- ++(13,0);
\draw [line width = 4pt, red] (0,7) -- ++(13,0);

\draw [line width = 4pt, red] (5,0) -- ++(8,0);
\draw [line width = 4pt, red] (6,1) -- ++(7,0);
\draw [line width = 4pt, red] (7,2) -- ++(6,0);
\draw [line width = 4pt, red] (8,3) -- ++(5,0);
\draw [line width = 4pt, red] (8,4) -- ++(5,0);

% crossing \sigma _1 (use relative coordinates)
\draw [line width = 4pt, red] (4,0) .. controls +(0.3,0)  .. +(0.5,0.5) .. controls +(0.7,1) .. +(1,1);
\draw [line width = 4pt, red] (4,1) .. controls +(0.3,0)  .. +(0.5,-0.5) .. controls +(0.7,-1) .. +(1,-1);
% crossing \sigma _2
\draw [line width = 4pt, red] (5,1) .. controls +(0.3,0)  .. +(0.5,0.5) .. controls +(0.7,1) .. +(1,1);
\draw [line width = 4pt, red] (5,2) .. controls +(0.3,0)  .. +(0.5,-0.5) .. controls +(0.7,-1) .. +(1,-1);
% crossing \sigma _3 
\draw [line width = 4pt, red] (6,2) .. controls +(0.3,0)  .. +(0.5,0.5) .. controls +(0.7,1) .. +(1,1);
\draw [line width = 4pt, red] (6,3) .. controls +(0.3,0)  .. +(0.5,-0.5) .. controls +(0.7,-1) .. +(1,-1);
% crossing \sigma _4 
\draw [line width = 4pt, red] (7,3) .. controls +(0.3,0)  .. +(0.5,0.5) .. controls +(0.7,1) .. +(1,1);
\draw [line width = 4pt, red] (7,4) .. controls +(0.3,0)  .. +(0.5,-0.5) .. controls +(0.7,-1) .. +(1,-1);

% draw boxes
\fill [red!20] (0.2,0.5) -- ++(0.8,0) -- ++(0,7) -- ++(-0.8,0) -- cycle;
\draw (0.2,0.5) -- ++(0.8,0) -- ++(0,7) -- ++(-0.8,0) -- cycle;
\node at (0.6,5.5) {$b_{k_B}^{-1}$};
\fill [red!20] (1.2,1.5) -- ++(0.8,0) -- ++(0,6) -- ++(-0.8,0) -- cycle;
\draw (1.2,1.5) -- ++(0.8,0) -- ++(0,6) -- ++(-0.8,0) -- cycle;
\node at (1.6,5.5) {..};
\fill [red!20] (2.2,2.5) -- ++(0.8,0) -- ++(0,5) -- ++(-0.8,0) -- cycle;
\draw (2.2,2.5) -- ++(0.8,0) -- ++(0,5) -- ++(-0.8,0) -- cycle;
\node at (2.6,5.5) {$b_2^{-1}$};
\fill [red!20] (3.2,3.5) -- ++(0.8,0) -- ++(0,4) -- ++(-0.8,0) -- cycle;
\draw (3.2,3.5) -- ++(0.8,0) -- ++(0,4) -- ++(-0.8,0) -- cycle;
\node at (3.6,5.5) {$b_1^{-1}$};

\fill [red!20] (12,-0.5) -- ++(0.8,0) -- ++(0,7) -- ++(-0.8,0) -- cycle;
\draw (12,-0.5) -- ++(0.8,0) -- ++(0,7) -- ++(-0.8,0) -- cycle;
\node at (12.4,4.5) {$b_{k_B}$};
\fill [red!20] (11,0.5) -- ++(0.8,0) -- ++(0,6) -- ++(-0.8,0) -- cycle;
\draw (11,0.5) -- ++(0.8,0) -- ++(0,6) -- ++(-0.8,0) -- cycle;
\node at (11.4,4.5) {..};
\fill [red!20] (10,1.5) -- ++(0.8,0) -- ++(0,5) -- ++(-0.8,0) -- cycle;
\draw (10,1.5) -- ++(0.8,0) -- ++(0,5) -- ++(-0.8,0) -- cycle;
\node at (10.4,4.5) {$b_2$};
\fill [red!20] (9,2.5) -- ++(0.8,0) -- ++(0,4) -- ++(-0.8,0) -- cycle;
\draw (9,2.5) -- ++(0.8,0) -- ++(0,4) -- ++(-0.8,0) -- cycle;
\node at (9.4,4.5) {$b_1$};

\fill [green!20] (8,3.5) -- ++(0.8,0) -- ++(0,4) -- ++(-0.8,0) -- cycle;
\draw (8,3.5) -- ++(0.8,0) -- ++(0,4) -- ++(-0.8,0) -- cycle;
\node at (8.4,5.5) {$s_i$};

\fill [red!20] (0.2,-0.4) -- ++(0.8,0) -- ++(0,0.8) -- ++(-0.8,0) -- cycle;
\draw (0.2,-0.4) -- ++(0.8,0) -- ++(0,0.8) -- ++(-0.8,0) -- cycle;
\node at (0.6,0) {$B'_{k_B}$};
\fill [red!20] (1.2,-0.4) -- ++(0.8,0) -- ++(0,0.8) -- ++(-0.8,0) -- cycle;
\draw (1.2,-0.4) -- ++(0.8,0) -- ++(0,0.8) -- ++(-0.8,0) -- cycle;
\node at (1.6,0) {..};
\fill [red!20] (2.2,-0.4) -- ++(0.8,0) -- ++(0,0.8) -- ++(-0.8,0) -- cycle;
\draw (2.2,-0.4) -- ++(0.8,0) -- ++(0,0.8) -- ++(-0.8,0) -- cycle;
\node at (2.6,0) {$B'_2$};
\fill [red!20] (3.2,-0.4) -- ++(0.8,0) -- ++(0,0.8) -- ++(-0.8,0) -- cycle;
\draw (3.2,-0.4) -- ++(0.8,0) -- ++(0,0.8) -- ++(-0.8,0) -- cycle;
\node at (3.6,0) {$B'_1$};

\fill [red!20] (8,2.6) -- ++(0.8,0) -- ++(0,0.8) -- ++(-0.8,0) -- cycle;
\draw (8,2.6) -- ++(0.8,0) -- ++(0,0.8) -- ++(-0.8,0) -- cycle;
\node at (8.4,3) {$B''_1$};
\fill [red!20] (8,1.6) -- ++(0.8,0) -- ++(0,0.8) -- ++(-0.8,0) -- cycle;
\draw (8,1.6) -- ++(0.8,0) -- ++(0,0.8) -- ++(-0.8,0) -- cycle;
\node at (8.4,2) {$B''_2$};
\fill [red!20] (8,0.6) -- ++(0.8,0) -- ++(0,0.8) -- ++(-0.8,0) -- cycle;
\draw (8,0.6) -- ++(0.8,0) -- ++(0,0.8) -- ++(-0.8,0) -- cycle;
\node at (8.4,1) {:};
\fill [red!20] (8,-0.4) -- ++(0.8,0) -- ++(0,0.8) -- ++(-0.8,0) -- cycle;
\draw (8,-0.4) -- ++(0.8,0) -- ++(0,0.8) -- ++(-0.8,0) -- cycle;
\node at (8.4,0) {$B''_{k_B}$};
\end{tikzpicture}
\end{center}
\end{figure}

For $\epsilon =(\epsilon _1, \ldots , \epsilon _k)\in \{ \pm 1\}^k$, we introduce the abreviation
\[ \tau (p, \epsilon )=
\tau _{p,p}^{\epsilon _k} \partial ^p(\tau _{p,p}^{\epsilon _{k-1}}) \cdots \partial ^{(k-1)p}(\tau _{p,p}^{\epsilon _1}). \]
We conclude that the Homomorphism Search Problem for $S_A$ specifies to the following particular (modified)
simultaneous decomposition problem.
\begin{list}{}{\setlength{\itemsep}{0cm} \setlength{\parsep}{0cm} }
\item[{\sc Input:}]  Element pairs $(s_1,s'_1),\ldots ,(s_m,s'_m)\in B_{\infty }^2$ with 
$$ s'_i=\partial ^p(\tilde{b}^{-1}) \tilde{\beta }' \partial ^p(\tilde{\beta }'') \tau (p,\epsilon ) \partial ^{kp}(s_i) \tilde{b}$$
for all $i$, $1\le i\le m$, for some (unknown) $k\in \mathbb{N}$, $\epsilon \in \{\pm \}^k$, $\tilde{b} \in B_{\infty }$, 
$$\tilde{\beta }' \in \partial ^{q_2}(B_{p-q_2}), \quad \tilde{\beta }'' \in \prod _{j=1}^{k} \partial ^{(j-1)p+q_1}(B_{p-q_1}). $$  
\item[{\sc Objective:}] Find $k' \in \mathbb{N}$, $\epsilon ' \in \{\pm \}^{k'}$,
$\hat{b} \in B_{\infty }$, $\hat{\beta }' \in \partial ^{q_2}(B_{p-q_2})$, $\hat{\beta }'' \in \prod _{j=1}^{k'} \partial ^{(j-1)p+q_1}(B_{p-q_1})$
such that $$s'_i=\partial ^p(\hat{b}^{-1}) \hat{\beta }' \partial ^p(\hat{\beta }'') \tau (p, \epsilon ') \partial ^{k'p}(s_i) \hat{b}$$ for all $i=1,\ldots ,m$.
\end{list}

Note that one has also to determine the iteration depth $k=k_B$ of Bob's secret homomorphism $\beta $ (or some pseudo iteration depth $k'$)
as well the bit sequence $\epsilon \in \{\pm 1\}^k$.
Since all instance elements live in some $B_N \subseteq B_{\infty }$ for some finite $N\in \mathbb{N}$, it is easy to obtain an upper bound for $k$ from $N$. 
\begin{rem} 
If we abandon simultaneity, i.e. in the case $m=1$, we obtain a \emph{(modified) special decomposition problem}.
In the following section we transform this particular problem to finitely many instances of the subgroup conjugacy coset problem for
parabolic subgroups of braid groups.
\end{rem}

\subsection{Conjugacy coset problem}

\begin{defi} \label{SCCP}
Let $H, K$ be subgroups of a group $G$. We call the following problem the subgroup conjugacy coset problem (SCCP) for $(H,K)$ in $G$.
\begin{list}{}{\setlength{\itemsep}{0cm} \setlength{\parsep}{0cm} }
\item[{\sc Input:}]  An element pair $(x,y) \in G^2$ such that $x^G \cap Hy \ne \emptyset $.
\item[{\sc Objective:}]  Find elements $h \in H$ and $c \in K$ such that $cxc^{-1}=hy$.
\end{list}
If $K=G$ then we call this problem the conjugacy coset problem (CCP) for $H$ in $G$.
\end{defi}

This is the search (or witness) version of this problem. The corresponding decision problem is to decide whether
the conjugacy class of $x$ and the left $H$-coset of $y$ intersect, i.e. whether $x^G \cap Hy \stackrel{?}{=} \emptyset $.
Anyway, in our cryptographic context we usually deal with search problems. 
\par 
It is clear from the definition that the SCCP is harder than the double coset problem (DCP) and the subgroup conjugacy problem (subCP), i.e., 
an oracle that solves SCCP for any pair $(H,K)\le G^2$ also solves DCP and subCP.  
\par 
Though the CCP and the SCCP are natural group-theoretic problems, they seem to have attracted little attention in combinatorial group theory so far.
To our knowledge they have been introduced in \cite{KT13}.  
\par 
We connect the (modified) special decomposition problem from the previous section to the SCCP.

\begin{prop} \label{SDP}
The (modified) special decomposition problem (for $m=1$) from section \ref{P2} can be solved by solving $2^k$ instances of the SCCP for some standard parabolic subgroups in braid groups,
namely the SCCP for $( \partial ^{N-p+q_2}(B_{p-q_2}) \cdot \prod _{j=1}^k \partial ^{(j-1)p+q_1}(B_{p-q_1}), B_{N-p})$ in $B_N$ for some $k, N \in \mathbb{N}$.
\end{prop}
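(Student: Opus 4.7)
My plan is to iterate over the $2^k$ sign choices $\epsilon \in \{\pm 1\}^k$ and, for each, reduce the resulting (modified) special decomposition instance to a single SCCP instance inside a suitable $B_N$.

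All quantities in the instance live inside some fixed $B_N \subset B_\infty$, where $N$ can be read off from the braid widths of $s_1$ and $s'_1$; the support condition on $\partial^{kp}(s_1)$ then yields an upper bound $k \le \lfloor (N-\mathrm{width}(s_1))/p\rfloor$. Enumerating all admissible depths $k'\le k$ together with all $\epsilon \in \{\pm 1\}^{k'}$ produces at most $O(2^k)$ candidate cases, and it suffices to solve each one separately. For fixed $(k',\epsilon)$ the element $y := \tau(p,\epsilon)\,\partial^{k'p}(s_1)\in B_N$ is known, and the task becomes: find $\hat b \in B_{N-p}$ together with $\hat h := \hat\beta'\,\partial^p(\hat\beta'')$ in the subgroup $H_0 := \partial^{q_2}(B_{p-q_2}) \cdot \prod_{j=1}^{k'} \partial^{jp+q_1}(B_{p-q_1})$ such that $s'_1 = \partial^p(\hat b^{-1})\,\hat h\,y\,\hat b$, equivalently $\partial^p(\hat b)\,s'_1\,\hat b^{-1} = \hat h\,y$.

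To turn this twisted conjugacy into an ordinary one, I invoke the classical fact that the shift $\partial^p$ is realised as conjugation by a fixed braid $\pi\in B_N$ on the parabolic subgroup $B_{N-p}$. Since both $\hat b$ and (by the width bound) $\hat\beta''$ lie in $B_{N-p}$, substituting $\partial^p(\hat b) = \pi\hat b\pi^{-1}$ and rearranging yields
\[\hat b \cdot (\pi^{-1} s'_1) \cdot \hat b^{-1} \;=\; \bigl(\pi^{-1} \hat h\, \pi\bigr)\cdot\bigl(\pi^{-1}\,y\bigr),\]
which is exactly an SCCP instance with $c := \hat b \in K = B_{N-p}$, inputs $x := \pi^{-1} s'_1$ and $y' := \pi^{-1} y$, and subgroup element $h := \pi^{-1}\hat h\pi$ ranging over the conjugate $\pi^{-1} H_0\,\pi$. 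It remains to identify this conjugate with $H := \partial^{N-p+q_2}(B_{p-q_2}) \cdot \prod_{j=1}^{k'}\partial^{(j-1)p+q_1}(B_{p-q_1})$. Taking $N=(k'+1)p$, the bulk factors $\partial^{jp+q_1}(B_{p-q_1})\subset\partial^p(B_{N-p})$ are sent by conjugation with $\pi^{-1}$, acting as the un-shift $\partial^{-p}$, to $\partial^{(j-1)p+q_1}(B_{p-q_1})$, matching the product factor of $H$; the remaining factor $\pi^{-1}\partial^{q_2}(B_{p-q_2})\pi$ becomes $\partial^{N-p+q_2}(B_{p-q_2})$ provided $\pi$ also transports the first parabolic block (strands $q_2+1,\dots,p$) to the last one (strands $N-p+q_2+1,\dots,N$).

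The main obstacle is precisely this last requirement: exhibiting a single $\pi\in B_N$ (or equivalently a chain of conjugating moves absorbable into the SCCP setup) that simultaneously realises the forward shift $\partial^p$ on the bulk parabolics and the ``wrap-around'' from the first to the last strand block, while keeping $\pi^{-1} s'_1$ and $\pi^{-1} y$ explicitly computable. Everything else — the bound on $k$, the $2^k$ enumeration, and the algebraic rewriting — is routine bookkeeping. Once such a $\pi$ is fixed, the $2^k$ SCCP instances for the pair $(H, B_{N-p})$ in $B_N$ collectively solve the (modified) special decomposition problem.
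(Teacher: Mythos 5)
Your reduction follows essentially the same route as the paper's proof: enumerate the $2^k$ sign vectors $\epsilon$, rewrite the twisted relation $s'=\partial ^p(\hat b^{-1})\hat h\, y\, \hat b$ as an ordinary conjugation $\hat b\, x\, \hat b^{-1}=h\, y'$ with $h$ ranging over the stated parabolic product and the conjugator $\hat b$ restricted to $B_{N-p}$. The one step you leave open --- exhibiting a single braid $\pi \in B_N$ that simultaneously realises $\partial ^p$ on $B_{N-p}$ by conjugation and wraps the first $p$-strand block around to the last one, while keeping $\pi ^{-1}s'$ computable --- is precisely the point the paper settles, and the required element is already defined in Section 2 where generalized shifted conjugacy is introduced: take $\pi =\tau _{p,N-p}=\delta _{p+1}\partial (\delta _{p+1})\cdots \partial ^{N-p-1}(\delta _{p+1})$, with $N$ chosen minimal such that $N\ge (k+1)p$, $p\mid N$ and $s'\in B_N$. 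Geometrically $\tau _{p,N-p}$ slides the block of strands $1,\dots ,p$ past the block $p+1,\dots ,N$, so conjugation by it induces the corresponding block permutation; this yields both of your requirements at once, namely $\tau _{p,N-p}^{-1}\,\partial ^p(x)=x\,\tau _{p,N-p}^{-1}$ for all $x\in B_{N-p}$ and $\tau _{p,N-p}^{-1}\,y=\partial ^{N-p}(y)\,\tau _{p,N-p}^{-1}$ for all $y\in B_p$. With this choice $\pi ^{-1}H_0\pi $ is exactly $\partial ^{N-p+q_2}(B_{p-q_2})\cdot \prod _{j=1}^{k}\partial ^{(j-1)p+q_1}(B_{p-q_1})$, and $\pi ^{-1}s'$ and $\pi ^{-1}\tau (p,\epsilon )\partial ^{kp}(s)$ are explicit, so the ``main obstacle'' you name is not an obstacle --- but as written your argument is incomplete without it, since the whole reduction hinges on that conjugator existing inside $B_N$ with both properties.

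Apart from this, your bookkeeping matches the paper's. The paper first commutes $\partial ^p(\tilde \beta '')$ through $\tau (p,\epsilon )$ via $\tau (p,\epsilon )\tilde \beta ''=\partial ^p(\tilde \beta '')\tau (p,\epsilon )$ and only then conjugates, whereas you fold $\partial ^p(\hat \beta '')$ into $\hat h$ and conjugate the whole product; since $\tilde \beta ''\in B_{kp}\subseteq B_{N-p}$, conjugation by $\tau _{p,N-p}^{-1}$ acts on those factors as the un-shift $\partial ^{-p}$ and the two manipulations produce the same subgroup. Your handling of the unknown iteration depth by enumerating $k'\le k$ is also consistent with the paper's remark that $N/p$ bounds $k$. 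So the single genuine gap is the unexhibited conjugator, and it is filled by $\tau _{p,N-p}$.
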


\begin{proof}
For $m=1$, we write $s=s_m$ and $s'=s'_m$.
Let $N \in \mathbb{N}$ be sufficiently large such that $s' \in B_N$. For convenience, we choose a minimal $N$ such that $N\ge (k+1)p$ and $p\mid N$.
As in \cite{KLT09, KT13} we conclude that $\tilde{b}\in B_{N-p}$ and $\partial ^p(\tilde{b}^{-1}) \in \partial ^p (B_{N-p})$. Therefore we have
$$\tau _{p,N-p}^{-1} \partial ^p(\tilde{b}^{-1})=\tilde{b}^{-1}\tau _{p,N-p}^{-1}. $$
Furthermore, since $\tau _{p,\epsilon }\tilde{\beta }''=\partial ^p(\tilde{\beta }'') \tau _{p,\epsilon}$, 
we get
\[ \begin{array}{rcll}
s' &=& \partial ^p(\tilde{b}^{-1}) \tilde{\beta }' \partial ^p(\tilde{\beta }'') \tau (p,\epsilon ) \partial ^{kp}(s) \tilde{b} & \Leftrightarrow \\
\tau _{p,N-p}^{-1}s' &=& \tilde{b}^{-1} \tau _{p,N-p}^{-1}  \tilde{\beta }' \partial ^p(\tilde{\beta }'') \tau (p,\epsilon ) \partial ^{kp}(s) \tilde{b} & \\
 &=&   \tilde{b}^{-1} \partial ^{N-p} ( \tilde{\beta }') \tilde{\beta }'' \tau _{p,N-p}^{-1}\tau (p,\epsilon ) \partial ^{kp}(s)  \tilde{b} & \Leftrightarrow \\
  \tilde{b}\tilde{s}'  \tilde{b}^{-1} &=& \tilde{\beta } \cdot \tilde{s} &
\end{array} \]
with $\tilde{s}'=\tau _{p,N-p}^{-1}s'$, $\tilde{s}=\tau _{p,N-p}^{-1}\tau _{p,\epsilon } \partial ^{kp} (s)$, and 
$$\tilde{\beta }=\partial ^{N-p} ( \tilde{\beta }')  \tilde{\beta }'' 
\in  \partial ^{N-p+q_2}(B_{p-q_2})  \cdot  \prod _{j=1}^{k} \partial ^{(j-1)p+q_1}(B_{p-q_1}).$$
So, if we solve this SCCP for all $\epsilon \in \{\pm 1\}^k$, we obtain a solution to the (modified) special decomposition problem (for $m=1$) from section \ref{P2}. Note that $|\{ \pm 1\}^k|=2^k$.
\end{proof}

\par Recall that the algorithms from \cite{KLT09, KLT10}, as well as from \cite{GKLT13}, only solve
instances of the subgroup conjugacy problem for parabolic subgroups of braid groups, partially by transformation to the simultaneous conjugacy problem
in braid groups \cite{KTTV13}.
No deterministic or even heuristic solution to the SCCP for (standard) parabolic subgroups in braid groups is known yet. 

\medskip

{\bf Open problem.} Find a solution to the SCCP, or even the CCP, for (standard) parabolic subgroups in the braid group $B_N$. 

\medskip

The attacker might try to approach first an apparently much easier (but still open) problem, namely the SCCP,  or even the CCP, for (standard) parabolic subgroups in the symmetric group $S_N$, which is a natural qoutient of $B_N$, given by the homomorphism $\sigma _i \mapsto (i,i+1)$. 
\par
The CCP (and the SCCP) appear to be inherently quad\-ratic, i.e. we do not see how it may be linearized such that linear algebra attacks
as the \emph{linear centralizer attack} of B. Tsaban \cite{Ts12} apply. It shares this feature with Y. Kurt's {\it Triple Decomposition Problem} 
(see section 4.2.5. in \cite{MSU11}).  
\medskip
Note that $k$ is still unknown to the attacker, but $N$ (even $N/p$) is surely an upper bound for $k$. Anyway, it suffices to solve $O(2^N)$ SCCP-instances.
This is the main advantage of the iterated Protocol 2 compared to Protocol 2 from \cite{KT13} (not iterated).
\par
\emph{Remark.} But the SCCP for $( \partial ^{N-p+q_2}(B_{p-q_2}) \cdot \prod _{j=1}^k \partial ^{(j-1)p+q_1}(B_{p-q_1}), B_{N-p})$ in $B_N$ 
admits a small disadvantage - compared to a SCCP for $(H,K)$ in $B_N$ for arbitrary (standard) parabolic subgroups $H, K$ of $B_N$ - 
which hasn't been pointed out in \cite{KT13}. \\
$\tilde{s}$ lives in $\partial ^{kp} (B_{N-kp})$. Therefore, $\partial ^{N-p} ( \tilde{\beta }') \cdot \tilde{s}$ commutes with 
\[ \tilde{\beta }'' \in  \prod _{j=1}^k \partial ^{(j-1)p+q_1}(B_{p-q_1}) \subseteq B_{kp}. \] Therefore, the attacker may conclude that $\tilde{s}'$ is conjugated (by a conjugator $\tilde{b} \in B_{N-p}$) to an element in the (standard) parabolic subgroup 
$\partial ^{kp}(B_{N-kp}) \cdot \prod _{j=1}^k \partial ^{(j-1)p+q_1}(B_{p-q_1}), B_{N-p})$, 
namely $\partial ^{N-p} ( \tilde{\beta }') \tilde{s} \cdot \tilde{\beta }''$. Using Nielsen-Thurston theory or some kind of subgroup distance attack
this feature might be exploited by the attacker. We leave this as an open problem.
 
\subsection{Challenges} 
As a challenge for a possible attacker, we provide some suggestions for the involved parameter values.  
\par
(1) Since the complexity of the involved braids might grow exponentially with the number $l$ of internal nodes of the involved p.r.b. trees, an 
implementation of Protocol 2 in braid groups (as outlined in section \ref{P2}) can only be efficient for small parameter values.
 Nevertheless, as a challenge, we suggest, for example, the following parameter values. 
We abandon simultaneity, i.e. we set $m=n=1$. The braids $s_1,t_1, a_1, \ldots , a_{k_A}, b_1, \ldots , b_{k_B}$ are generated as "random" signed words
(over the standard generators $\sigma _i$) of length $L=15$ in $B_N$ with $N=10$.
We choose $p=6$ for the generalized shift and $q=q_1=q_2=p/2=3$.
The braid $A'_1, \ldots ,A'_{k_A}$ and $A''_1, \ldots , A''_{k_A}$ are generated as "random" signed words
(over the standard generators $\sigma _i$) of length $L_{ops}=5$ in $B_q$. The $B'_i$'s and $B''_i$'s are chosen analogeoulsy, but from $\partial ^q(B_q)$.
The iteration depths are set to $k_A=k_B=3$, and we set the number $l=l_A=l_B$ of internal nodes of the involved planar rooted binary trees to 4.
\par 
(2) A more efficient implementation in braid groups  can be obtained by using the bi-LD-system $(B_{\infty }, *, \bar{*})$.
As a challenge, we suggest, for example, the following parameter values.
We abandon simultaneity, i.e. we set $m=n=1$. The braids $s_1,t_1, a_1, \ldots , a_{k_A}, b_1, \ldots , b_{k_B}$ are generated as "random" signed words
(over the standard generators $\sigma _i$) of length $L=25$ in $B_N$ with $N=4$.
The iteration depths are set to $k:=k_A=k_B=5$, and we set the number $l=l_A=l_B$ of internal nodes of the involved planar rooted binary trees to 5.
\par
A disadvantage of this scheme is the following.
Analogeously to Proposition \ref{SDP} one may attack this scheme by solving $2^k$ instances of the subgroup CP for some standard parabolic subgroup in braid groups, namely the subgroup CP for $B_{N-1}$ in $B_N$ for some $N \in \mathbb{N}$.
\par
(3) An extremely efficient implementation of Protocol 2 (as outlined in section \ref{P2}) can be obtained by working in in the quotient system 
$S_{\infty }$ rather than the partial multi-LD-system $B_{\infty }$. 
Here we may choose much larger parameter values as a challenge.
We abandon simultaneity, i.e. we set $m=n=1$. The permutations $s_1,t_1, a_1, \ldots , a_{k_A}, b_1, \ldots , b_{k_B}$ are generated as random
permutations in $S_N$ with $N=200$.
We choose $p=20$ for the generalized shift and $q=q_1=q_2=p/2=10$.
The  permutations $A'_1, \ldots ,A'_{k_A}$ and $A''_1, \ldots , A''_{k_A}$ are generated as random
permutations in $S_q$.
The iteration depths $k_A, k_B$ are chosen from the interval $[2, \ldots , 30]$, 
and we choose the numbers of internal nodes of the involved planar rooted binary trees from the interval $[10, \dots , 20]$.
\par
Analogeously to Proposition \ref{SDP} one may attack Bob's secret by  solving $O(k)$ ($k=k_B$) instances of the SCCP for some standard parabolic subgroups in symmetric groups, namely the SCCP for $( \partial ^{N-p+q_2}(S_{p-q_2}) \cdot \prod _{j=1}^k \partial ^{(j-1)p+q_1}(S_{p-q_1}), S_{N-p})$ in $S_N$ for some $k, N \in \mathbb{N}$.
Note that here for an attack on Bob's key the solution $O(k)$ (rather than $2^k$) SCCP-instance suffices.
This is because under the surjection $B_{\infty } \rightarrow S_{\infty}$, $\tau (p, \epsilon )$ maps to  the fixed permutation  
\[ \left( \begin{array}{cccccc} 1    & \cdots & p      & p+1 & \cdots & (k+1)p \\
                            kp+1 & \cdots & (k+1)p & 1   & \cdots & kp \end{array} \right)\]
 for all $\epsilon \in \{\pm 1\}^k$, and only $k$ remains unknown.
\par
\medskip

\section{Other instantiations}

\subsection{Instantiations using $f$-conjugacy}

A straightforward computation yields the following proposition.
\begin{prop}
Let $G$ be a group and $f_1, f_2 \in End(G)$. Then $(G, *_{f_1}, *_{f_2})$ with $x*_{f_i}y=f_i(x^{-1}y)x$ (for $i=1,2$) is a mutually left distributive
system if and only if $f_1=f_2$.
\end{prop}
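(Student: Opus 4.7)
The forward direction is immediate and should be dispatched first: when $f_1=f_2=:f$, both mutual left distributivity equations collapse into the single left self-distributivity equation for $*_f$, which was already noted in the paper (from the discussion of LD-conjugacy) to hold. So the work lies entirely in the ``only if'' direction.

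For the converse, the plan is a direct expansion of both mutual LD equations. Using that each $f_i$ is a homomorphism, I would write out
\[
x*_{f_1}(y*_{f_2}z) = f_1(x^{-1})\, (f_1f_2)(y^{-1}z)\, f_1(y)\, x
\]
and
\[
(x*_{f_1}y)*_{f_2}(x*_{f_1}z) = f_2(x^{-1})\, (f_2f_1)(y^{-1}z)\, f_2(x)\, f_1(x^{-1})\, f_1(y)\, x,
\]
where in the second expression I use the simplification $f_1(x^{-1}y)^{-1} f_1(x^{-1}z) = f_1(y^{-1}z)$. Equating the two and cancelling the rightmost $x$ (and, since $y$ is arbitrary, the trailing $f_1(y)$ after setting $w:=y^{-1}z$) leaves
\[
f_1(x^{-1})\,(f_1f_2)(w) = f_2(x^{-1})\,(f_2f_1)(w)\, f_2(x)\, f_1(x^{-1})
\qquad (\forall\, x,w \in G).
\]

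Next I would extract consequences by specializing. Setting $x=1$ kills the outer conjugating factors and yields $f_1f_2 = f_2f_1$. Feeding this commutativity back into the displayed identity, and performing the same expansion for the second mutual distributivity law (which is obtained by swapping $f_1\leftrightarrow f_2$), I would compare the two identities to deduce $f_1(x)=f_2(x)$ for every $x\in G$. The concrete move is to pick substitutions that isolate a single factor of $f_i(x)$ or $f_i(x^{-1})$ on one side: for instance, specializing $w$ and combining the two identities lets one solve for $f_2(x)f_1(x^{-1})$ in terms of data that forces it to equal the identity.

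The main obstacle is bookkeeping rather than ideas: both sides of each mutual LD equation contain long alternating products of $f_i$'s applied to $x,y,z$ and their inverses, and the cancellations must be done carefully so that one genuinely ends with the pointwise identity $f_1(x)=f_2(x)$ and not merely a weaker relation (such as $f_1(x)^{-1}f_2(x)$ lying in the centralizer of $\mathrm{Im}(f_1f_2)$). The author's phrase ``straightforward computation'' suggests that after the $x=1$ simplification and symmetric use of both equations, the remaining algebraic identity collapses directly to $f_1=f_2$, which is what I would aim to exhibit.
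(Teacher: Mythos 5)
Your expansions are correct, and so is the reduction: cancelling $f_1(y)x$ on the right, setting $w=y^{-1}z$, and then $x=1$ does give $f_1f_2=f_2f_1$, after which the first mutual LD law is equivalent (left-multiply your displayed identity by $f_2(x)$) to
\[
\bigl[f_2(x)f_1(x)^{-1},\,(f_1f_2)(w)\bigr]=1 \qquad \text{for all } x,w\in G .
\]
The gap is in your final step. The second mutual LD law, expanded the same way, yields exactly the same two conditions ($f_1f_2=f_2f_1$ and $f_1(x)f_2(x)^{-1}\in C_G(\mathrm{Im}(f_1f_2))$, which is the same centralizer condition); substituting either of the two identities into the other collapses to a tautology, so no choice of specialization ``solves for'' $f_2(x)f_1(x)^{-1}$ and forces it to be trivial. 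The weaker relation you flag as a danger is in fact the \emph{entire} content of mutual left distributivity here, and it does not imply $f_1=f_2$.

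Indeed the proposition as stated is false, so the converse cannot be completed by any bookkeeping. Take $G=\mathbb{Z}$ (additive), $f_1(n)=2n$, $f_2(n)=3n$, so that $x*_{f_1}y=2y-x$ and $x*_{f_2}y=3y-2x$. Then $x*_{f_1}(y*_{f_2}z)=6z-4y-x=(x*_{f_1}y)*_{f_2}(x*_{f_1}z)$ and $x*_{f_2}(y*_{f_1}z)=6z-3y-2x=(x*_{f_2}y)*_{f_1}(x*_{f_2}z)$, so $(\mathbb{Z},*_{f_1},*_{f_2})$ is a mutual left distributive system with $f_1\neq f_2$; any abelian $G$ with commuting $f_1\neq f_2$ works, and non-abelian examples exist as well. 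What the computation actually proves is: $(G,*_{f_1},*_{f_2})$ is a mutual left distributive system if and only if $f_1f_2=f_2f_1$ and $f_2(x)f_1(x)^{-1}$ centralizes $\mathrm{Im}(f_1f_2)$ for every $x\in G$; the conclusion $f_1=f_2$ needs an additional hypothesis such as $C_G(\mathrm{Im}(f_1f_2))=\{1\}$. Since the paper offers no argument beyond the phrase ``a straightforward computation,'' there is nothing to compare your route against --- but your attempt, carried out honestly, exposes that the claimed equivalence does not hold in the stated generality.
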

Therefore, we don't have any nontrivial partial multi-LD-structures using $f$-conjugacy. We only have the platform LD-system $(G,*_f)$ for some fixed
endomorphism $f\in End(G)$ and we can only apply Protocol 1.
\par
In Protocol 1 Bob's public key consist of elements $s'_i=b_{k_B}*_f( \cdots b_2 *_f(b_1*_f s_i) \cdots )$ (for $i=1, \ldots ,m$).
Evaluating the right hand side, we obtain
\begin{eqnarray*}
 s'_i &=& f(b_{k_B}^{-1}) \cdots f^{k_B-1}(b_2^{-1})f^{k_B}(b_1^{-1})\cdot f^{k_B}(s_i) \cdot f^{k_B-1}(b_1) \cdots f(b_{k_B-1})b_{k_B}. \\
   &=& f(\tilde{b}) f^{k_B}(s_i) \tilde{b} \quad {\rm with} \quad \tilde{b}=f^{k_B-1}(b_1) \cdots f(b_{k_B-1})b_{k_B}.
\end{eqnarray*}
Since $\langle s_1 \rangle _{*_f}=\{ s_1\}$, we cannot abandon simultaneity for $f$-conjugacy, i.e. we have $m\ge 2$.
Therefore, we have for $1\le i\ne j\le m$
\begin{eqnarray*}
s'_i(s'_j)^{-1} &=& f(\tilde{b}^{-1}) f^{k_B}(s_is_j^{-1}) f(\tilde{b}), \quad {\rm and} \\
(s'_j)^{-1}s_i &=& \tilde{b}^{-1} f^{k_B}(s_j^{-1}s_i) \tilde{b}.
\end{eqnarray*}
Now, an attacker might try to solve (in parallel) the following $2U_B$ ${m\choose 2}$-simultaneous CP-instances:
\begin{eqnarray*}
 \{ (s'_i(s'_j)^{-1},f^{k}(s_is_j^{-1})) \mid 1 \le i \ne j\le m \}, \quad \forall k=1, \ldots , U_B, \quad {\rm and} \\
 \{((s'_j)^{-1}s_i, f^{k}(s_j^{-1}s_i)) \mid 1 \le i\ne j \le m \}, \quad \forall k=1, \ldots , U_B.
\end{eqnarray*}
where $U$ denotes some upper bound on $k_B$ which might be obtained from the public keys or parameter specifications of the particular
$f$-conjugacy KEP instantiation. Actually, it suffices to solve the latter $U_B$ ${m\choose 2}$-simultaneous CP-instances.
If the center of $G$ is "small", the attacker might obtain the original private keys $k_B$ and $\tilde{b}$.
Similarly, he might approach Alice's private keys by solving (in parallel) $U_A$ ${n\choose 2}$-simultaneous CP-instances,
where $k_A\le U_A$. Thus she might possibly obtain also $k_A$ and $\tilde{a}=f^{k_A-1}(a_1) \cdots f(a_{k_A-1})a_{k_A}$, and from
these $f^{k_A}(a_0)$. This suffices to recover the shared key
\[ K=f(\tilde{a}^{-1})f^{k_A+1}(\tilde{b}^{-1}) f^{k_A+k_B}(a_0)f^{k_A}(\tilde{b})\tilde{a}. \]
Therefore, it is recommended to choose the generators $s_i$ (and $t_j$) of $S_A$ (and $S_B$) such that the following sets have large centralizers
\[ \{ f^{k_B}(s_is_j^{-1}) \mid 1 \le i \ne j\le m \}, \,\, \{ f^{k_B}(s_is_j^{-1}) \mid 1 \le i \ne j\le m \}, \]
\[ \{ f^{k_A}(t_it_j^{-1}) \mid 1 \le i \ne j\le n \}, \,\, \{ f^{k_A}(t_it_j^{-1}) \mid 1 \le i \ne j\le n \}. \]
Since Alice cannot know $k_B$ (and Bob not $k_A$), this might be achieved by choosing the $s_i$'s and $t_j$'s such that the generator sets of $S_A$ and $S_B$
have already large centralizers. \par

\medskip

\emph{Instantiation in finite matrix groups.} Here we propose an efficient instantiation of the iterated $f$-conjugacy KEP in the finite matrix group
$G=GL(d, \mathbb{F}_{p^N})$ where the $f$-conjugacy operation is given by the homomorphism $f\in End(G)$ that is induced by the Frobenius ring endomorphism
$Fr \in End(\mathbb{F}_{p^N})$, defined by $x \mapsto x^p$. Since $ord (Fr)=N$ induces $ord f=N$, the iteration depths $k_A$, $k_B$ are bounded below $n$.
Therefore, it is recommended to choose $p$ small and $N$ "large". 
As a challenge, we suggest, for example, the following parameter values. 
Set $d=6$, $p=2$, $N=40$, $m=n=8$, and the iteration depths are $k_A=k_B=25$.
We set the number $l=l_A=l_B$ of internal nodes of the involved planar rooted binary trees to 10.

\begin{exa}
As a further example we propose a possible instantiation of the iterated $f$-conjugacy KEP in pure braid groups.
\par 
Recall that the $N$-strand braid group $B_N$ is generated by $\sigma _1$, ..., $\sigma _{N-1}$ where inside $\sigma _i$ the $(i+1)$-th
strand crosses over the $i$-th strand. There exists a natural epimorphism from $B_N$ onto the symmetric group $S_N$, defined by $\sigma _i \mapsto (i,i+1)$.
Let $G$ be the kernel of this epimorphism, namely the $N$-strand pure braid group $P_N$.
For some small integer $d\ge 1$, consider the epimorphism $\eta _d: P_N \longrightarrow P_{N-d}$ given by "pulling out" (or erasing) the last $d$ strands, i.e. the strands $N-d+1, \ldots , N$. Consider the shift map $\partial : B_{N-1} \longrightarrow B_N$, defined by $\sigma _i \mapsto \sigma _{i+1}$,
and note that $\partial ^d (P_{N-d}) \le P_N$. Now, we define the endomorphism $f: P_N \longrightarrow P_N$ by the composition $f=\partial ^d \circ \eta _d$,
and our KEP is Protocol 1 applied  to the LD-system $(P_N, *_f)$. Note that the iteration depths $k_A$, $k_B$ are bound below $N/d$. Here, $d=1$ is of particular
interest since it allows for the biggest upper bound on $k_A$ and $k_B$.
\par 
Alternatively, one may use the following modified scheme. Recall that $P_N$ is generated by the ${N\choose 2}$ elements
\[ A_{i,j}=\sigma _{j-1}\cdots \sigma _{i+1} \sigma _i^2 \sigma _{i+1}^{-1}\cdots \sigma _{j-1}^{-1} \quad (1\le i<j \le N). \]
Now, for $d\ge 3$, define $f\in End(P_N)$ by $f(A_{i,j}^{\pm 1})=c^{\pm 1} \cdot \partial ^d \circ \eta _d(A_{i,j}^{-1})$
where the constant pure braid $c\in P_N$ is given by
$c=\tau _{N-d,d}\tau _{d,N-d} c_0$ for some constant pure braid $c_0\in P_d$.
Inside the pure braid $\tau _{N-d,d}\tau _{d,N-d}$ the first $d$ strands go around the last $N-d$ strands (or vice versa).
\end{exa} 

\begin{rem}
We leave it for future work to construct further instances of the iterated $f$-conjugacy KEP. The following proposition
suggests that any platform LD-system $(G,*_f)$ with $G$ group and $f\in End(G)$ satisfying $f^2\ne f$.
\begin{prop}
Consider the relation $\rightarrow _{*_f}$ induced by $f$-conjugacy, i.e. $x \rightarrow _{*_f}y$ if there exists a $c\in G$ such that
$y=c*_f x=f(c^{-1}x)c$. The relation $\rightarrow _{*_f}$ is transitive if and only if $f$ is a projector, i.e. $f^2=f$.
\end{prop}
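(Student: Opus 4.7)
My plan is to prove the two directions separately. The setup for both is the same expansion: writing $y = c *_f x = f(c^{-1}) f(x) c$ and $z = d *_f y$, the homomorphism property of $f$ gives
\[
z = f(d^{-1}) f^2(c^{-1}) f^2(x) f(c) d,
\]
and transitivity of $\to_{*_f}$ amounts to the requirement that this expression can be rewritten in the form $f(e^{-1}) f(x) e = e *_f x$ for some $e \in G$, for arbitrary $x, c, d$.

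For the $(\Leftarrow)$ direction, assume $f^2 = f$. The displayed formula simplifies to $z = f(d^{-1}) f(c^{-1}) f(x) f(c) d$. I would then guess the witness $e := f(c) d$ and verify it works: using $f^2 = f$ once more,
\[
f(e^{-1}) \;=\; f\bigl(d^{-1} f(c)^{-1}\bigr) \;=\; f(d^{-1}) f^2(c^{-1}) \;=\; f(d^{-1}) f(c^{-1}),
\]
so $f(e^{-1}) f(x) e = z$, as desired, and hence $x \to_{*_f} z$.

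For the $(\Rightarrow)$ direction, assume transitivity. Combining the trivial edges $x \to f(x)$ (via $c = 1$) and $f(x) \to f^2(x)$ (via $c = 1$) by transitivity yields an element $e = e(x) \in G$ with $f^2(x) = f(e^{-1}) f(x) e$. Specialising further to $x = 1$ forces $e = f(e)$, so $e$ must lie in the fixed-point set of $f$. My plan is then to parametrise $e$ across several choices of $(x, c, d)$ and compare the resulting identities to eliminate $e$. The main obstacle is that the single equation $f^2(x) = f(e^{-1}) f(x) e$ does not, on its own, pin $e$ down: even in abelian examples one can find solutions $e$ without forcing $f^2 = f$. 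So the crux of the argument is to extract, via a judicious second specialisation (for instance $x := c$ to cancel the $f^2(x)$ factor, or contrasting $(c,d) = (1, x)$ against $(c,d) = (x, 1)$) and chaining iterates $x \to f^n(x)$, enough additional constraints to conclude $f^2(x) = f(x)$ pointwise.
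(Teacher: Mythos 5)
Your $(\Leftarrow)$ direction is correct and complete: with $f^2=f$ the witness $e=f(c)d$ indeed satisfies $f(e^{-1})f(x)e=f(d^{-1})f(c^{-1})f(x)f(c)d=z$. (For comparison purposes, note that the paper states this proposition inside a remark with no proof at all, so there is no argument of the authors' to measure yours against.)

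The $(\Rightarrow)$ direction, however, is only an announced plan, and the obstacle you flag is fatal rather than technical: the ``only if'' half of the proposition is false as stated. Take $G=\mathbb{Z}$ (written additively) and $f(n)=2n$. Then $c*_fx=f(x-c)+c=2x-c$, so for any $x,y$ the choice $c=2x-y$ gives $x\rightarrow_{*_f}y$; the relation is all of $G\times G$, hence transitive, yet $f^2\neq f$. More generally, for any abelian $G$ one has $x\rightarrow_{*_f}y$ if and only if $y-f(x)\in\mathrm{Im}(\mathrm{id}-f)$, and transitivity reduces to the inclusion $\mathrm{Im}(f-f^2)\subseteq\mathrm{Im}(\mathrm{id}-f)$, which holds automatically because $f-f^2=(\mathrm{id}-f)\circ f$. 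So no amount of further specialisation of $(x,c,d)$ or chaining of iterates $x\rightarrow f^n(x)$ can extract $f^2=f$; the converse needs an additional hypothesis on $G$ or $f$ before it can even be attempted. Your own remark that ``even in abelian examples one can find solutions $e$ without forcing $f^2=f$'' is precisely this counterexample in embryo --- it should have been pushed to the conclusion that the direction fails, rather than read as a sign that more constraints were needed.
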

Therefore, if $f$ is a projector the iterated $f$-conjugacy KEP doesn't yield any advantage compared to its non-iterated version (see \cite{KT13}).
\par 
Furthermore, the following (iterated version of the) $f$-conjugator search problem should be hard.

\begin{list}{}{\setlength{\itemsep}{0cm} \setlength{\parsep}{0cm} }
\item[{\sc Input:}]  Element pairs $(s_1,s'_1),\ldots ,(s_m,s'_m)\in G^2$ with 
\[ s'_i=f(\tilde{b}^{-1}) f^{k_B}(s_i) \tilde{b} \quad \forall i, \,\, 1\le i\le m, \]
for some (unknown) $k_B\in \mathbb{N}$, $\tilde{b} \in G$.  
\item[{\sc Objective:}] Find $k'_B \in \mathbb{N}$, $\hat{b} \in G$ such that 
\[ s'_i=f(\hat{b}^{-1}) f^{k'_B}(s_i) \hat{b} \quad  \forall i=1,\ldots ,m.  \]
\end{list}
\end{rem}

\subsection{Instantiations using $f$-symmetric conjugacy}
A straightforward computation yields the following proposition.
\begin{prop}
Let $G$ be a group and $f_1, f_2$ two projectors in $End(G)$. Then $(G, *_{f_1}, *_{f_2})$ with $x*_{f_i}y=f_i(xy^{-1})x$ (for $i=1,2$) is a mutually left distributive system if and only if $f_1=f_2$.
\end{prop}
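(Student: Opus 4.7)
The plan is to leverage item (ii) of the distributivity results stated just before the $f$-symmetric conjugacy subsection, which asserts that $\circ_f$ is distributive over $\circ_g$ if and only if $f = fg = gf$. The current proposition concerns exactly the mutual left distributivity of the pair $(\circ_{f_1}, \circ_{f_2})$, which is the conjunction of ``$\circ_{f_1}$ distributes over $\circ_{f_2}$'' and ``$\circ_{f_2}$ distributes over $\circ_{f_1}$'', so the argument should be a clean two-step deduction in the forward direction, and essentially trivial in the backward direction.

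For the backward implication ($f_1 = f_2$), set $f := f_1 = f_2$. Since $f$ is a projector, the excerpt already establishes that $(G,\circ_f)$ is an LD-system. The two mutual left distributivity axioms both collapse to the single self-distributivity law $x \circ_f (y \circ_f z) = (x \circ_f y) \circ_f (x \circ_f z)$, which holds by hypothesis. So $(G, \circ_{f_1}, \circ_{f_2})$ is a mutual left distributive system.

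For the forward implication, assume $(G, \circ_{f_1}, \circ_{f_2})$ is mutual left distributive. Applying (ii) with $f = f_1$ and $g = f_2$ gives $f_1 = f_1 f_2 = f_2 f_1$. Applying (ii) again with $f = f_2$ and $g = f_1$ gives $f_2 = f_2 f_1 = f_1 f_2$. In particular, $f_1 f_2 = f_1$ and $f_1 f_2 = f_2$, whence $f_1 = f_2$.

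The only point of delicacy is that I am invoking (ii), whose proof is not reproduced in the excerpt; if one prefers a self-contained argument, the alternative is a direct expansion. Expand, for arbitrary $x,y,z \in G$, both sides of $x \circ_{f_1} (y \circ_{f_2} z) = (x \circ_{f_1} y) \circ_{f_2} (x \circ_{f_1} z)$ using $u \circ_{f_i} v = f_i(uv^{-1})u$, then cancel the common right factor $x$ and the projector identities $f_i^2 = f_i$ (which follow from the system (\ref{fsymmConjLDeqs}), and which are implied by $(G,\circ_{f_i})$ being LD). Choosing $y = z = 1$ or $x = y$ in the resulting identity isolates the relation $f_1 f_2 = f_1$; the symmetric identity gives $f_1 f_2 = f_2$; hence $f_1 = f_2$. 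The main obstacle is only bookkeeping in the straightforward calculation, and the paper indeed warns that the verification is routine.
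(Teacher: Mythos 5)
Your reduction to item (ii) of the $f$-symmetric conjugacy subsection is precisely the route the paper itself intends (the paper offers nothing beyond ``a straightforward computation'', but already in Section~2 it deduces from (ii) that $(G,\circ_f,\circ_g)$ is not mutual left distributive for $f\neq g$), and your ``if'' direction is fine. The problem is the ``only if'' direction: there you lean on (ii), which is itself stated in the paper without proof and is essentially the statement to be proved, and your self-contained fallback does not close that gap. Writing $\circ_{f_i}$ for the operation $x\circ_{f_i}y=f_i(xy^{-1})x$, expanding $x\circ_{f_1}(y\circ_{f_2}z)=(x\circ_{f_1}y)\circ_{f_2}(x\circ_{f_1}z)$, cancelling the common right factor $x$, and substituting $u=xy^{-1}$, $w=zy^{-1}$ yields
\[
f_1(u)\,f_1f_2(w)\;=\;f_2f_1(uwu^{-1})\,f_1(u)\qquad\text{for all }u,w\in G.
\]
Your proposed specializations do not isolate $f_1f_2=f_1$: taking $y=z=1$ (i.e.\ $w=1$) gives a tautology, and taking $x=y$ (i.e.\ $u=1$) gives only $f_1f_2=f_2f_1$. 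In fact the displayed identity is equivalent to $f_1f_2=f_2f_1$ \emph{together with} the condition that $f_1f_2(u)^{-1}f_1(u)$ centralize the image of $f_1f_2$ for every $u$, which is strictly weaker than $f_1=f_1f_2$.

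Worse, the ``only if'' implication cannot be salvaged by better bookkeeping, because it is false as stated: take $G=\mathbb{Z}^2$ and let $f_1,f_2$ be the two coordinate projections. Both are projectors, $f_1f_2=f_2f_1=0$, and writing the operations additively one checks $x\circ_{f_i}(y\circ_{f_j}z)=f_i(x-y)-f_if_j(y-z)+x=f_i(x-y)+x$ and $(x\circ_{f_i}y)\circ_{f_j}(x\circ_{f_i}z)=f_jf_i(z-y)+f_i(x-y)+x=f_i(x-y)+x$, so both mutual distributivity laws hold although $f_1\neq f_2$. The same example refutes item (ii) of the paper. A correct proof therefore needs an additional hypothesis (for instance that the relevant centralizers are trivial); neither your argument nor the paper's appeal to a ``straightforward computation'' supplies one, so the gap here is genuine and lies in the statement you are asked to prove as much as in the proof.
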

Therefore, we don't have any nontrivial partial multi-LD-structures using $f$-symmetric conjugacy. We only have the platform LD-system $(G,*_f)$ for some fixed
projector $f\in End(G)$ and we can only apply Protocol 1.
\par
In Protocol 1 Bob's public key consist of elements $s'_i=b_{k_B}*_f( \cdots b_2 *_f(b_1*_f s_i) \cdots )$ (for $i=1, \ldots ,m$).
Evaluating the right hand side, we obtain
\[ s'_i= \left\{
\begin{array}{l}
 f(b_{k_B} b_{k_B-1}^{-1} \cdots b_3 b_2^{-1}b_1)\cdot f(s_i^{-1}) \cdot f(b_1b_2^{-1}b_3 \cdots b_{k_B-1}^{-1}) b_{k_B}, \quad k_B \,\, {\rm odd}, \\
 f(b_{k_B}b_{k_B-1}^{-1} \cdots b_3^{-1}b_2b_1^{-1})\cdot f(s_i) \cdot f(b_1^{-1}b_2b_3^{-1} \cdots b_{k_B-1}^{-1}) b_{k_B}, \quad k_B \,\, {\rm even}.
\end{array} \right. \]

Consider the relation $\rightarrow _{*_f}$ induced by $f$-symmetric conjugacy, i.e. $x \rightarrow _{*_f}y$ if there exists a $c\in G$ such that
$y=c*_f x=f(cx^{-1})c$. The relation $\rightarrow _{*_f}$ is never transitive.
Therefore,  the iterated $f$-conjugacy KEP always  provides an advantage compared to its non-iterated version (see \cite{KT13}).
\par
We conclude the following $m$-simultaneous \emph{iterated $f$-symmetric conjugator search problem} should be hard.

\begin{list}{}{\setlength{\itemsep}{0cm} \setlength{\parsep}{0cm} }
\item[{\sc Input:}]  Element pairs $(s_1,s'_1),\ldots ,(s_m,s'_m)\in G^2$ with 
\[ s'_i=f(b_{k_B} \cdots b_3^{\pm 1} b_2^{\mp 1}b_1^{\pm 1}) f(s_i^{\mp 1}) f(b_1^{\pm 1}b_2^{\mp 1}b_3^{\pm 1} \cdots b_{k_B-1}^{-1}) \cdot b_{k_B}  \quad \forall i, \,\, 1\le i\le m, \]
for some (unknown) $k_B\in \mathbb{N}$, $b_1, \ldots , b_k \in G$.  
\item[{\sc Objective:}] Find $k'_B \in \mathbb{N}$, $b'_1, \ldots , b'_{k'_B} \in G$ such that 
\[ s'_i=f(b_{k'_B} \cdots (b'_2)^{\mp 1} (b'_1)^{\pm 1}) f(s_i^{\mp 1}) f((b'_1)^{\pm 1}(b'_2)^{\mp 1} \cdots (b'_{k_B-1})^{-1}) \cdot b'_{k_B} \] for all $i$ with $1\le i\le m$. 
\end{list}

\begin{rem}
As for $f$-conjugacy, since $\langle s_1 \rangle _{*_f}=\{ s_1\}$, we cannot abandon simultaneity for $f$-symmetric conjugacy, i.e. we have $m\ge 2$.
Therefore, we have for $1\le i\ne j\le m$
\begin{eqnarray*}
s'_i(s'_j)^{-1} &=& f(s_i^{\mp 1}s_j^{\pm 1})^{f(b_1^{\mp 1}b_2^{\pm 1} \cdots b_{k_B}^{-1})}, \quad {\rm and} \\
(s'_j)^{-1} s'_i &=& f(s_j^{\pm 1}s_i^{\mp 1})^{f(b_1^{\pm 1}b_2^{\mp 1} \cdots b_{k_B -1}^{-1}) \cdot b_{k_B}}. 
\end{eqnarray*}
Now, an attacker might try to solve  the following two ${m\choose 2}$-simultaneous CP-instances:
\begin{eqnarray*}
 \{ (s'_i(s'_j)^{-1},f(s_i^{\mp 1}s_j^{\pm 1})) \mid 1 \le i \ne j\le m \}, \quad {\rm and} \\
 \{((s'_j)^{-1}s_i, f(s_j^{\pm 1}s_i^{\mp 1})) \mid 1 \le i\ne j \le m \}.
\end{eqnarray*}
Note that here the attacker has to solve both ${m\choose 2}$-simultaneous CP-instances.
If the center of $G$ is "small", the attacker might obtain the private keys 
$\tilde{b}_{lhs}=f(b_{k_B} \cdots b_2^{-\epsilon _B}b_1^{\epsilon _B})$ 
and $\tilde{b}_{rhs}=f(b_1^{\epsilon _B}b_2^{-\epsilon _B} \cdots b_{k_B-1}^{-1}) \cdot b_{k_B}$, where $\epsilon _B=1$ if $k_B$ odd and $-1$ otherwise.
Similarly, he might approach Alice's private keys by solving two ${n\choose 2}$-simultaneous CP-instances, 
thus possibly obtaining the corresponding keys $\tilde{a}_{lhs}=f(a_{k_A} \cdots a_2^{-\epsilon _A}a_1^{\epsilon _A})$ 
and $\tilde{a}_{rhs}=f(a_1^{\epsilon _A}a_2^{-\epsilon _A} \cdots a_{k_A-1}^{-1}) \cdot a_{k_A}$, and from
these (and $p_0$) $f(a_0)$. This suffices to recover the shared key
\begin{eqnarray*}
   K&=&f(a_{k_A} \cdots a_1^{\epsilon _A}b_{k_B}^{-\epsilon _A} \cdots b_1^{-\epsilon _A\epsilon _B} a_0^{\epsilon _A\epsilon _B}
          b_1^{-\epsilon _A\epsilon _B} \cdots  \cdot b_{k_B}^{-\epsilon _A} a_1^{\epsilon _A} \cdots a_{k_A-1}^{-1}) \cdot a_{k_A} \\
   &=&\tilde{a}_{lhs}\tilde{b}_{rhs} f(a_0)^{\epsilon _A\epsilon _B} f(\tilde{b}_{rhs}) \tilde{a}_{rhs}.
\end{eqnarray*}
Therefore, it is recommended to choose the generators $s_i$ (and $t_j$) of $S_A$ (and $S_B$) such that the following sets have large centralizers
\[ \{ f(s_is_j^{-1}) \mid 1 \le i \ne j\le m \}, \,\, \{ f(s_is_j^{-1}) \mid 1 \le i \ne j\le m \}, \]
\[ \{ f(t_it_j^{-1}) \mid 1 \le i \ne j\le n \}, \,\, \{ f(t_it_j^{-1}) \mid 1 \le i \ne j\le n \}. \]
This might  be achieved by choosing the $s_i$'s and $t_j$'s such that the generator sets of $S_A$ and $S_B$
have already large centralizers. 
\end{rem}

\subsubsection{Finite matrix groups as platforms}. 
(1) We propose matrix groups over the field of multivariate rational functions $F(t_1,\ldots , t_N)$ over $F=\mathbb{F}_q$ as possible
platform groups. The projecting endomorphism $f$ is an evaluation endomorphism, evaluating $M$ ($M\le N$) variables over the finite field $\mathbb{F}_q$.
More precisely, let $d\in \mathbb{N}$, $G=GL(d,F(t_1,\ldots , t_N))$, $I \in \{1, \ldots , N\}^M$ and $c \in \mathbb{F}_q^M$. 
Then $f=f_{I,c} \in End(G)$ is given by $t_{I_i} \mapsto c_i$ for all $i=1,\ldots , M$. Therefore, $(G, *_f)$ is our platform LD-system with $*_f$ being
$f$-symmetric conjugacy. 
\par All generators $s_i$ and $t_j$ should be chosen such that their images under the evaluation homomorphism $f$ are invertible,
and they should have large centralizers.
\par The large centralizer condition might be satisfied, for example, using the following construction. For $d=N$, we consider images of pure braids  under the
Gassner representation $P_N \longrightarrow GL(d,F(t_1,\ldots , t_N))$ \cite{Ga61} (where we reduce the involved integers modulo $q$).
Images of (conjugates of) reducible (or "cabled") pure braids will certainly have "large" centralizers.
\par 
Unfortunately, since the coefficient ring $F(t_1,\ldots , t_N)$ is infinite, the numerator and denominator polynomials start to grow quickly.
Thus, $G=GL(d,F(t_1,\ldots , t_N))$ is only for small parameter values an efficient platform group.
Nevertheless, as a challenge, we suggest, for example, the following parameter values. 
Let $d=4$, $M=N=1$, and $q =37$. 
For simplicity, we assume that $q$ is prime. Furthermore, we set $m=n=6$, and the iteration depths are $k_A=k_B=5$.
We set the number $l=l_A=l_B$ of internal nodes of the involved planar rooted binary trees to 5.
Recall that these trees are needed for the generation of $a_0,b_1, \ldots , b_{k_B}$. \par
\medskip
(2) The simultaneous iterated $f$-symmetric conjugator search problem appears to be hard even in finite groups.
 Here we propose a more efficient instantiation of the iterated $f$-symmetric conjugacy KEP in the finite matrix group
$G=GL(d, R)$ with coefficient ring $R=\mathbb{F}_{p}[X]/(X^N-1)$ ($N=p-1$) where the $f$-symmetric conjugacy operation is given by the 
homomorphism $f\in End(G)$ that is induced by the evaluation homomorphism $R \longrightarrow \mathbb{F}_p$, defined by $X\mapsto r$ for some fixed $r\in \mathbb{F}_p^*$.
This map is well defined since $r^{p-1}-1=0$ for all $r \in \mathbb{F}_p^*$ according to Fermat's little theorem. Though the ring $R$ has the same cardinality as $F_{p^{p-1}}$, $R$ is not a field since the polynomial $X^{p-1}-1= \prod _{r \in \mathbb{F}_p^*} (X-r)$ is not irreducible. For general $N$, $R$ is also called the \emph{ring of $N$-truncated polynomials}, and it is the platform ring of NTRUEncrypt \cite{HPS98}.
\par As a challenge, we suggest, for example, the following parameter values. 
Set $d=4$, $p =17$, $m=n=8$, and the iteration depths are $k_A=k_B=10$.
We set the number $l=l_A=l_B$ of internal nodes of the involved planar rooted binary trees also to 10.
\par More generally, we could have chosen $R=\mathbb{F}_q[X]/(g)$ as our coefficient ring, where $g$ is a reducible polynomial (of degree $N$) over $\mathbb{F}_q$ 
and $q$ is some prime power. Then $f\in End(G)$ is induced by some evaluation homomorphism on $R$ which evaluates $X$ on a root of $g$.
\medskip

{\bf Acknowledgements.} The first author acknowledges financial support by the Minerva Foundation of Germany. 
\par 
The second author acknowledges financial support by The Oswald Veblen Fund. 
We thank Boaz Tsaban for encouragement and fruitful discussions.

\end{document}